\newcommand{\Z}{\mathbb{Z}}
\newcommand{\R}{\mathbb{R}}
\newcommand{\C}{\mathbb{C}}
\newcommand{\eqdef}{\overset{\underset{\mathrm{def}}{}}{=}}
\newcommand{\norm}[1]{ \left\| #1 \right\| }
\newcommand{\DS}{D_{S^2}}
\newcommand{\DB}{D_{S^2} + c B}
\newcommand{\fdspace}{\mathcal{D}}
\newcommand{\cstar}{C$^*$}
\newcommand{\disp}{\eta}
\newcommand{\mstatespace}[1]{\mathbb{P}(#1)}
\newcommand{\statespace}{\mstatespace{H_\Lambda}}
\newcommand{\estatespace}[1]{\mstatespace{H_\Lambda}_{#1}}
\newcommand{\statemap}{\Phi_\Lambda}
\newcommand{\contdist}{d_{M}}
\newcommand{\lambdist}{d_{\Lambda}}
\newcommand{\trlambdist}{\widetilde{d_{\Lambda}}}
\newcommand{\rndist}[2]{d_{\R^n}\left(#1, #2\right)}
\newcommand{\embedding}{\phi}
\newcommand{\brkt}[1]{\mathbf{#1}}
\newcommand{\projective}[1]{\mathbb{P}(#1)}
\newcommand{\lip}[3]{\operatorname{Lip}_{#3}^{(#1)} (#2)}
\newcommand{\ip}[1]{\left\langle #1 \right\rangle}
\newcommand{\spip}[1]{\left( #1 \right)_{\spinorbundle}}
\newcommand{\tbigo}[1]{\widetilde{O} \left( #1 \right)}
\newcommand{\bigo}[1]{O \left( #1 \right)}
\newcommand{\spinc}{spin$^c$}
\newcommand{\hk}[3]{h^{#1}_{#2 #3}}
\newcommand{\pk}[3]{p^{#1}_{#2 #3}}
\newcommand{\plift}[2]{p^{#1}(#2)}
\newcommand{\pliftmap}[1]{p^{#1}}
\newcommand{\spectralkernel}[1]{\widetilde{E}_{#1}}
\newcommand{\spectrallift}[1]{E_{#1}}
\DeclareMathOperator{\id}{id}
\DeclareMathOperator{\vol}{vol}
\DeclareMathOperator{\End}{End}
\DeclareMathOperator{\rank}{rank}
\theoremstyle{plain}
\newtheorem{theorem}{Theorem}[subsection]
\newtheorem{lemma}[theorem]{Lemma}
\newtheorem{proposition}[theorem]{Proposition}
\newtheorem{corollary}[theorem]{Corollary}
\newtheorem*{problem}{Problem}
\theoremstyle{remark}
\newtheorem{remark}[theorem]{Remark}
\theoremstyle{definition}
\newtheorem{definition}[theorem]{Definition}
\newcommand{\mc}[1]{\mathcal{#1}}
\newcommand{\spinorbundle}{\mathbf{S}}
\newcommand{\algo}{\textsc{PointForge} }
\author{Lisa Glaser \and Abel B. Stern}
\title{Reconstructing manifolds from truncated spectral triples}
\begin{document}

\maketitle
\begin{abstract}
  We explore the geometric implications of introducing a spectral cut-off on Riemannian manifolds.
  This is naturally phrased in the framework of non-commutative geometry, where we work with spectral triples that are \emph{truncated} by spectral projections of Dirac-type operators.
  We associate a metric space of `localized' states to each truncation.
  The Gromov-Hausdorff limit of these spaces is then shown to equal the underlying manifold one started with.
  This leads us to propose a computational algorithm that allows us to approximate these metric spaces from the finite-dimensional truncated spectral data.
  We subsequently develop a technique for embedding the resulting metric graphs in Euclidean space to asymptotically recover an isometric embedding of the limit.
  We test these algorithms on the truncated sphere and a recently investigated perturbation thereof.
\end{abstract}

\tableofcontents

\section{Introduction}
A natural notion of \emph{scale} is a major asset to any geometric theory with ties to the physical world.
After all, our geometric knowledge of objects appearing around us is finite, being limited by our observational power which fails us at high energy scales.
Additionally, the appearance of divergences in e.g. quantum field theory, especially when combined with gravity, is closely tied to bridging the gap between finite and infinite, or discrete and continuum, models.
Moreover, computational representation and analysis of geometric models strongly relies on our ability to extract from our model what is relevant and computationally feasible.

The field of noncommutative geometry has had close ties to physics ever since its inception, and yet lacks a consistent treatment of scale in the sense imagined.
The aim of this paper is to ameliorate this situation, by constructing a natural metric counterpart to the finite objects that are here referred to as \emph{truncated} (commutative) \emph{spectral triples}, and aiming to show that these do indeed carry enough information to describe their continuum limit in arbitrary detail.
For a wide-ranging and systematic approach to such truncations, see \cite{ConnesSuij}.

Admittedly, finite-dimensional objects in noncommutative geometry have enjoyed enduring attention.
General finite spectral triples have been classified~\cite{krajewski1998classification,cacic2011moduli, chamseddine2008standard} and parametrized~\cite{Barrett:2015naa}, and the Connes metric on these spaces has been studied in depth~\cite{iochum_distances_1999}.
However, this framework seems to lack simultaneous presence of 1) a natural link to the continuum in terms of metric spaces and 2) a natural link to the continuum in terms of spectral triples.

When representation-theoretic knowledge of the continuum is available, the framework of \emph{fuzzy spaces} such as those of \cite{Grosse1995, BALACHANDRAN2002184,Dolan_2003} seems to provide at least a natural link to the continuum in terms of spectral triples, and even some metric knowledge is available there~\cite{Schneiderbauer_Steinacker_2016}.
However, it might be said that the construction of a `fuzzy' version of a manifold is somewhat \emph{ad hoc} from a Riemannian viewpoint, and at least the framework has not yet seen successful extension to a reasonably large class of manifolds.

Truncated spectral triples (see Section~\ref{sec:trunc-spec-geom} and beyond) provide the advantage of a natural scale parameter and a natural symmetry-preserving correspondence between different scales.
The natural framework in which to study these truncations themselves as metric spaces is that of state spaces, as in \cite{ConnesSuij}, which can then be equipped either with the  Connes metric associated to the truncated spectral triple itself or with the pullback of the Connes metric on the full spectral triple.

An early and interesting study of the topological and metric properties of such spaces can be found in~\cite{d2014spectral}.
More recently, \cite{2020arXiv200508544V} investigates the question of Gromov-Hausdorff convergence of such truncated spaces under the truncated metric; see \cite{tey_torus_2019} for the example of the torus.
By contrast, here we are interested in \emph{localized} states, in order to recover e.g. a manifold $M$, rather than its metric space of probability measures.
Our companion paper~\cite{GlaserSternCCM19} investigates the relevance of the higher Heisenberg equation of~\cite{Chamseddine_Connes_Mukhanov_2015} in the framework of truncated spectral triples (see Section~\ref{sec:dc-intro}).

Arguably the main mathematical result of this paper is Corollary~\ref{cor:g-h-approximation}, which shows that the `localized' states of Section~\ref{sec:localized}, when equipped with the pullback metric, recover the Riemannian manifold one started with in the Gromov-Hausdorff limit.

In order to make the result more concrete and link back to the `computational representation' alluded to above, Section~\ref{sec:algorithm} is devoted to the description of an algorithm to approximate (finite subsets of) these metric spaces from the raw datum of a truncated spectral triple.
This allows us to test some of the results of Section~\ref{sec:localized} on the example of the sphere in Section~\ref{sec:example}.

Finite non-Euclidean metric spaces, as obtained by our algorithm, do not necessarily lend themselves to easy visualization or comparison by standard computational techniques.
In order to gain traction in this direction, we propose to look for new (asymptotically, locally isometric) embedding techniques and present a candidate approach in Section~\ref{sec:embedding_in_Rn}.
This allows us to visualize the metric results of Section~\ref{sec:example} and -- as originally inspired the technique -- compare the truncated spectral triple for $S^2$ and its perturbation as in~\cite{GlaserSternCCM19}.

\section{Truncated spectral geometries and point reconstruction} \label{sec:setup}
In noncommutative geometry one describes a spin manifold $M$ in terms of the associated spectral triple $(C^\infty(M), L^2(M, \spinorbundle), D_\spinorbundle)$, where $\spinorbundle$ is a spinor bundle over $M$ and $D_\spinorbundle$ is the corresponding Dirac operator.

Connes' reconstruction theorem \cite{Connes:SpectralCharacterizationManifolds} shows that this association is a bijection: one can fully reconstruct the underlying manifold $M$ and the chosen spin structure from the spectral triple alone.

Of particular interest for the present paper is the way one can recover the Kantorovich-Rubinstein distance between probability measures on $M$ from the interplay of the algebra $\mathcal{A} = C^\infty(M)$ and the Dirac operator $D = D_\spinorbundle$, acting on the Hilbert space of spinors.
Probability measures correspond to states on the \cstar-algebra $C(M) \supset C^\infty(M)$, which carries the topological, as opposed to differentiable, information about $M$.
Moreover, a function $f \in \mathcal{A}$ has Lipschitz constant $k$ if and only if $\norm{[D,f]} \leq k$ (as operators on $H = L^2(M, \spinorbundle_M)$).
Thus, Kantorovich-Rubinstein duality allows us to write
\begin{align}\label{eq:distance}
  d(\omega_1,\omega_2)=\sup_{a\in \mc{A}} \{ \left| \omega_1(a)-\omega_2(a) \right| \mid \norm{[D,a]} \leq 1\} \;.
\end{align}

When the states $\omega$ of $C(M)$ are \emph{pure}, they correspond to the atomic measures on single points, and we can thus recover $M$ with its metric.
This paper answers the question as to how we can understand this recovery of $M$ from the perspective of finite-dimensional parts of the representation of $\mathcal{A}$ and $D$ on $H$.
That is, we will construct natural counterparts to the ingredients above in the setting of \emph{truncated} spectral triples, and show that the metric space $M$ can be recovered as an asymptotic limit thereof.

\subsection{Truncated spectral geometries}
\label{sec:trunc-spec-geom}

From a mathematical perspective, it is desirable to be able to describe the infinite-dimensional datum of a spectral triple as a limit of finite-dimensional data of increasing precision, just like one can describe a compact Riemannian manifold as a Gromov-Hausdorff limit of finite metric spaces (by, for instance, equipping suitably dense finite subsets with the induced metric).
From a physical perspective, the same desire results from the view that one should be able to gain at least \emph{some} information about the geometry by probing it at finite energies.

A natural way to introduce such a `finite-energy cutoff', that is, truncation, of the geometric data $(A, H, D)$ is to pick a scale $\Lambda$, and define the corresponding spectral projection of $D$,
\[
  P_\Lambda \eqdef \chi_{[-\Lambda,\Lambda]}(D)
\]
to generate the finite-dimensional data
\begin{align}\label{eq:tSpT}
  (A_\Lambda,  H_\Lambda, D_\Lambda ) \eqdef (P_\Lambda A P_\Lambda, P_\Lambda H, P_\Lambda D )\;.
\end{align}
Truncated spectral triples were explored in depth by Connes and van~Suijlekom in ~\cite{ConnesSuij} and \cite{2020arXiv200508544V}; for a clear overview of the basic theory of their state spaces see~\cite{d2014spectral}.

This is the setting in which we wonder what counterpart to the duality $(A, H, D) \leftrightarrow M$, provided by the reconstruction theorem, can be found at the level of $(A_\Lambda, H_\Lambda, D_\Lambda)$.

\subsection{Point reconstruction}

We aim to refine the reconstruction of a spin manifold $M$ from its spectral triple $(A, H, D)$ through an understanding of the metric information contained in the truncations $(A_\Lambda, H_\Lambda, D_\Lambda)$.
The full manifold should then emerge asymptotically, such as through a Gromov-Hausdorff limit of the objects corresponding to the truncations.

The vector states of $C(M)$ that are induced by elements of $H_\Lambda$ appear naturally as vector states of $A_\Lambda$ as well, so that we have access to probability measures on $M$ directly in the truncated setting.
This, together with the distance formula \eqref{eq:distance}, is the main ingredient of our approach: we will identify states that correspond to points of $M$ in a suitable (asymptotic) sense, such that \eqref{eq:distance} asymptotically recovers the corresponding geodesic distance.

Our `proxy' approach to state localization, as discussed in section \ref{sec:local-y-disp}, was inspired by the notion of quasi-coherent states on fuzzy spaces defined in~\cite{Schneiderbauer_Steinacker_2016}.
However, as we aim just for the induced metric geometry on $M$ and view $(C^\infty(M)_\Lambda,H_\Lambda, D_\Lambda)$ rather as a finite observation of a spectral triple than as a quantization thereof, we will not construct coherent states in any quantum-mechanical sense but rather aim for localization only.
Moreover, as discussed below, this `proxy' approach is merely introduced to gain computational feasibility, at the expense of requiring identification of an embedding $\embedding: M \to \R^n$.
See also the end of Section~\ref{sec:metriconstates}.

After we define localized states, we prove existence of the desired objects: a sequence of metric spaces associated to the truncations $(A_\Lambda, H_\Lambda, D_\Lambda)$ that do indeed converge to $M$ in the Gromov-Hausdorff sense.
Then, Section~\ref{sec:algorithm} proposes an algorithm to construct these metric spaces  computationally, in order to make actual examples amenable to computer simulation.

In Section~\ref{sec:embedding_in_Rn} we propose a simple algorithm to obtain approximately locally isometric embedding of the resulting finite metric spaces into Euclidean space.
These should asymptotically converge to an isometric embedding of $M$ itself, and allow us to view the resulting finite metric spaces and investigate them more easily.

\section{The metric space of localized states}
\label{sec:localized}
Given a truncated commutative spectral triple $(C^\infty(M)_{\Lambda} ,H_{\Lambda} , D_{\Lambda} )$, we aim to construct a finite metric space that describes $M$ to the level of accuracy that the truncated spectrum will allow.

We will identify a \emph{subset} of the (vector) states of $C^\infty(M)_\Lambda$, consisting of those states that are \emph{localized} in a suitable sense and, therefore, correspond approximately to points of $M$.
The Connes metric on these vector states will then turn this subset into a metric space.

The guiding demand for this construction will be that the resulting
metric spaces should asymptotically (as $\Lambda \to \infty$ so that
$P_{\Lambda} \to 1$) converge, in the Gromov-Hausdorff sense, to the metric space $M$.

Now, the pure states of $C(M)$ -- that is, actual points in the metric space $M$ we are approximating -- do not necessarily extend to $C(M)_\Lambda$.
We do, however, have access to vector states induced by $v \in H_{\Lambda} $, which can be applied to either because both $C(M)$ and $C(M)_\Lambda$ are subsets of $B(H)$.

\begin{definition}
  $\statespace$ is the projective space over $H_\Lambda$.
\end{definition}

Each element $\brkt{v}$ of $\statespace$ corresponds to a positive linear functional of norm $1$ on $C(M)$ given by $a \mapsto \langle v, a v \rangle / \langle v, v \rangle$ (for any representative $v$ of $\brkt{v}$).
By the Riesz Representation theorem, such functionals correspond uniquely to probability measures on $M$.

\begin{definition}
  For $\brkt{v} \in \statespace$, $\mu_v$ is the unique probability measure such that $\langle v, a v \rangle / \langle v, v \rangle = \int_M a(x) d \mu_v(x)$ for all $a \in C(M)$ and any representative $v$ of $\brkt{v}$.
\end{definition}

By this identification, the Kantorovich-Rubinstein metric on the probability measures of $M$ induces a metric $\lambdist$ on $\statespace$.
It is an open conjecture that this metric can be (asymptotically) computed using the data $(C^\infty(M)_\Lambda, H_\Lambda, D_\Lambda)$: see Section~\ref{sec:metriconstates} for a discussion.

We will say that $\brkt{v}$ is localized when $\mu_v$ is sufficiently concentrated near a single point in $M$.
In order to quantify this notion, we will introduce the \emph{dispersion} functional $\disp$ on $\statespace$, below.

Let $\lambdist$ and $\contdist$ denote the metrics on $\statespace$ and $M$, respectively, so that we may regard both as subsets of the space of probability measures on $M$ equipped with the Kantorovich-Rubinstein metric.
We now wish to construct a subspace of $(\statespace, \lambdist)$ that is (Gromov-)Hausdoff close to $(M, \contdist)$.

Proposition~\ref{prop:statesarepoints} will show that there is a map $b \colon \statespace \to M$ such that $|\lambdist(\brkt{v}_1, \brkt{v}_2) - \contdist(b(\brkt{v}_1), b(\brkt{v}_2))| = O(\sqrt{\disp(\brkt{v}_1)} + \sqrt{\disp(\brkt{v}_2)})$; that is, our localized states can be identified almost isometrically with points of $M$.
Proposition~\ref{prop:pointsarestates} will then show that there is a corresponding asymptotically inverse map $\statemap \colon M \to \statespace$ such that $\contdist(x, b(\statemap(x))) = \tbigo{\Lambda^{-1}}$ and $\disp(\statemap(x)) = \tbigo{\Lambda^{-2} }$ uniformly in $x$.
This leads to Corollary~\ref{cor:g-h-approximation}, which shows that there exist a subspace $\estatespace{\epsilon^2}$ of $\statespace$ that is $\epsilon$-close to $M$ in Gromov-Hausdorff distance, where $\epsilon = \tbigo{\Lambda^{-1} }$.
Finally, Section~\ref{sec:metriconstates} discusses how these notions connect to the setting of $(C^\infty(M)_\Lambda, H_\Lambda, D_\Lambda)$.

\subsection{Localization: $\embedding$ and the dispersion functional}
\label{sec:local-y-disp}
Since elements of $\statespace$ correspond uniquely to probability measures on $M$, a natural way to measure the localization of such an element would be to take e.g. the variance of (isometrically embedded) position under this measure; that is, one would naturally define the dispersion of $\brkt{v} \in \statespace$ to be $\inf_{x \in M} E_{\mu_v} \left[ d(x, \cdot)^2 \right]$, where $E_{\mu_v}$ denotes expectation values under $\mu_v$.
In terms of the algebraic data, this quantity can be estimated as $\sup_{a \in C_\Lambda^\infty(M)}
\{ \langle v, a^* a v \rangle - |\langle v, a v \rangle|^2  \mid \norm{[D,a]} \leq 1 \}$.
However, the relevant non-convex double optimization problem -- to find minima $v$ of this dispersion in high-dimensional $H$ -- is computationally extremely challenging except in the simplest cases.

Therefore, we will require a proxy, $\embedding \colon M \to \R^n$, for the extremizing element $a$ above, in the sense that the Euclidean distance $\rndist{\embedding(x)}{\embedding(y)}$ on $M$ is bi-Lipschitz equivalent to the distance $\contdist(x, y)$ appearing in the variance.
Thus, let $\embedding: M \to \R^n$ be a (not necessarily Riemannian) embedding.

\begin{definition}
  Let $\mu$ be a probability measure on $M$.
  Then, the dispersion $\disp(\mu)$ equals
  \[
    \disp(\mu) \eqdef \int_M \rndist{\embedding(x)}{E_{\mu} \left[\embedding\right] }^2 d \mu(x)
  \]
\end{definition}

In probabilistic terms, $\disp(\mu)$ is just the trace of the covariance matrix of the vector-valued random variable $\embedding$, under the probability measure $\mu$.

\subsection{The $\phi$-barycenter of a localized state}
\label{sec:resolution}

An element $\brkt{v}$ of $\statespace$ that is considered to be localized should be localized \emph{somewhere}, that is, around some `barycenter' $x_v \in M$.
In order to control the localization of $\mu_v$ around the point $x_v$ by the dispersion $\disp(\mu_v)$, it is important that $\embedding(x_v)$ be close to $E_{\mu_v} \left[\embedding \right]$ in $\R^n$.
Hence,

\begin{definition}
  Let $\mu$ be a probability measure on $M$.
  Then a \emph{$\phi$-barycenter} of $\mu$ is any point $x \in M$ that minimizes $\rndist{\embedding(x)}{E_{\mu} \left[ \embedding \right]}$.
\end{definition}

By compactness of $M$ and continuity of $\rndist{\embedding(\cdot)}{E_{\mu} \left[\embedding \right]}$, there always exists a $\phi$-barycenter.

Localized states are indeed concentrated near their $\phi$-barycenters, as the following lemma shows.
That is, the dispersion $\disp(\mu)$ is a good proxy for the squared second Wasserstein distance\footnote{See e.g. \cite{villani2003topics} for an introduction to the measure-theoretic notions that are applied (without any hint of sophistication) in this section.} $W_2(\mu, \delta_{x})^2$ between the measure $\mu$ and any given $\phi$-barycenter $x$ thereof.

\begin{lemma}
  \label{lem:localized-near-barycenter}
  Any $\phi$-barycenter $x$ of a probability measure $\mu$ satisfies
  \[
    W_2(\mu, \delta_x)^2 \eqdef \int_M \contdist(z, x)^2 d\mu(z) = O(\disp(\mu)),
  \]
  uniformly\footnote{That is, the relevant constant depends only on $\embedding$ and $M$, not on $\mu$.} in $\mu$, where $\delta_{x}$ denotes the Dirac measure centered on $x$.
  Moreover, any two $\phi$-barycenters of $\mu$ are within distance $O(\sqrt{\disp(\mu)})$, uniformly in $\mu$, of each other.
\end{lemma}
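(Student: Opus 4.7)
The plan is to reduce everything to simple geometry in $\R^n$, using the bi-Lipschitz equivalence between $\contdist$ and the pullback under $\embedding$ of the Euclidean distance. Fix constants $c, C > 0$ with $c \cdot \contdist(x,y) \leq \rndist{\embedding(x)}{\embedding(y)} \leq C \cdot \contdist(x,y)$ for all $x, y \in M$, and denote by $y_\mu \eqdef E_\mu[\embedding] \in \R^n$ the Euclidean mean. Since $y_\mu$ minimises $y \mapsto \int_M \rndist{\embedding(z)}{y}^2 d\mu(z)$, the minimum value being $\disp(\mu)$ itself, this point is the natural anchor in $\R^n$ around which all estimates will be organised.

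The one genuinely non-trivial step is to bound $\rndist{\embedding(x)}{y_\mu}$ for an $\embedding$-barycenter $x$. By the defining minimisation property, $\rndist{\embedding(x)}{y_\mu} \leq \rndist{\embedding(z)}{y_\mu}$ for every $z \in M$; squaring and integrating against $\mu(z)$ yields
\[
  \rndist{\embedding(x)}{y_\mu}^2 \leq \int_M \rndist{\embedding(z)}{y_\mu}^2 d\mu(z) = \disp(\mu).
\]
This trick is the heart of the proof, since it lets us control the distance from $\embedding(x)$ to $y_\mu$ without needing any a priori information on the support of $\mu$ — recall that $y_\mu$ lies only in the convex hull of $\embedding(M)$ and could in principle be far from $\embedding(M)$ itself.

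From here the first claim is routine: the Euclidean inequality $\rndist{\embedding(z)}{\embedding(x)}^2 \leq 2\rndist{\embedding(z)}{y_\mu}^2 + 2\rndist{y_\mu}{\embedding(x)}^2$, integrated against $\mu$ and combined with the bound above, gives $\int_M \rndist{\embedding(z)}{\embedding(x)}^2 d\mu(z) \leq 4\disp(\mu)$. The bi-Lipschitz lower bound on $\embedding$ then converts this into $W_2(\mu, \delta_x)^2 \leq 4c^{-2}\disp(\mu)$, with a constant depending only on $\embedding$ and $M$. For the second claim, applying the key bound to two $\embedding$-barycenters $x_1, x_2$ and invoking the Euclidean triangle inequality gives $\rndist{\embedding(x_1)}{\embedding(x_2)} \leq 2\sqrt{\disp(\mu)}$, whence $\contdist(x_1, x_2) \leq 2c^{-1}\sqrt{\disp(\mu)}$ by the lower bi-Lipschitz bound. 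I do not anticipate any real obstacle; the entire argument rests on the one-line integration trick in the second paragraph, and everything else is standard bookkeeping.
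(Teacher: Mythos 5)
Your proof is correct and follows essentially the same route as the paper: bound the Euclidean distance from $\embedding(x)$ to the mean $E_{\mu}\left[\embedding\right]$ by $\sqrt{\disp(\mu)}$, then transfer everything to $M$ via the bi-Lipschitz property of the smooth embedding $\embedding$. The only difference is cosmetic: where the paper obtains that key bound via Chebyshev's inequality applied on the support of $\mu$, you integrate the minimisation inequality directly against the probability measure $\mu$, which reaches the identical estimate slightly more cleanly; the remaining bookkeeping (your $(a+b)^2 \leq 2a^2+2b^2$ versus the paper's $f^2-g^2$ expansion) yields the same constants.
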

\begin{proof}
  By Chebyshev's inequality, $\mu (\left\{ x \in M \mid \norm{\embedding(x) - E_{\mu}\left[\embedding \right]} \geq t \right\})$ is bounded by $t^{-2} E_{\mu} \left[ \norm{\embedding - E_{\mu} \left[\embedding \right]}^2 \right] = t^{-2} \disp(\mu)$.
  Therefore, if $\rndist{\embedding(\cdot)}{E_{\mu} \left[\embedding \right]}^2 \geq t$ on the support of $\mu$, we see that $1 \leq t^{-2} \disp(\mu)$.
  Therefore, we conclude that $\inf_{x \in \operatorname{supp} \mu} \rndist{\embedding(x)}{E_{\mu} \left[ \embedding \right]} \leq \sqrt{\disp(\mu)}$.
  Any $\phi$-barycenter of $\mu$ must therefore, as a minimizer of $\rndist{\embedding(\cdot)}{E_{\mu} \left[\embedding \right]}$ in $M \supset \operatorname{supp} \mu$, also satify this inequality.
  
  Now, as a smooth embedding, $\embedding$ is automatically bi-lipschitz.
  In particular, there exists $\beta$ such that $\contdist(x, y) \leq \beta \rndist{\embedding(x)}{\embedding(y)}$ uniformly in $x,y$.
  We see, therefore, that any two $\phi$-barycenters of $\mu$ are at a distance at most $2 \beta \sqrt{\disp(\mu)}$.
  
  Moreover, for all $x \in M$, we conclude that $\int_M \contdist(z, x)^2 d \mu(z) $ is bounded by $\beta^2 \int_M \rndist{\embedding(z)}{\embedding(x)}^2 d \mu(z)$.
  As $| \int_M f^2 - g^2 d \mu | \leq \int_M (2|g| + |f-g|)|f-g| d \mu$ and for $\phi$-barycenters $x$ of $\mu$, $\left| \rndist{\embedding(z)}{\embedding{x}} - \rndist{\embedding{z}}{E_{\mu}\left[\embedding\right]} \right| \leq \sqrt{\disp(\mu)}$, the error $| \eta(\mu) - \rndist{\embedding(z)}{\embedding(x)}^2 d \mu(z) |$ is bounded by $\int_M (2 \rndist{\embedding(z)}{E_{\mu}\left[\embedding\right]} + \sqrt{\disp(\mu)}) \sqrt{\disp(\mu)} d \mu(z)$ which is, by the classical Jensen inequality and the definition of $\eta(\mu)$, bounded by $3 \disp(\mu)$.
  The Lemma follows.
\end{proof}

We now consider the implications of the above for the barycenters of probability measures $\mu_v$, for $\brkt{v} \in \statespace$.

\begin{proposition} \label{prop:statesarepoints}
  There exists a map $b \colon \statespace \to M$ such that
  \[
    \left| \lambdist(\brkt{v}, \brkt{w}) - \contdist(b(\brkt{v}), b(\brkt{w})) \right| = O(\sqrt{\disp(\mu_v)} + \sqrt{\disp(\mu_w)})
    \]
  as $\disp(\mu_v), \disp(\mu_w) \to 0$, uniformly in $\brkt{v}$, $\brkt{w}$.
\end{proposition}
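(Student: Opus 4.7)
The plan is to define $b(\brkt v)$ to be a $\phi$-barycenter of the associated probability measure $\mu_v$ (making an arbitrary choice when the barycenter is not unique; Lemma~\ref{lem:localized-near-barycenter} ensures this choice only affects the result by $O(\sqrt{\disp(\mu_v)})$, which is within the stated error). Then the task reduces to comparing the Kantorovich-Rubinstein distance $\lambdist(\brkt v, \brkt w) = W_1(\mu_v, \mu_w)$ between the two measures with the distance $\contdist(b(\brkt v), b(\brkt w)) = W_1(\delta_{b(\brkt v)}, \delta_{b(\brkt w)})$ between the Dirac masses at their barycenters. This is a clean application of the triangle inequality for $W_1$, combined with the estimate already packaged in Lemma~\ref{lem:localized-near-barycenter}.

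Concretely, the first step is to upgrade the $W_2$-bound of Lemma~\ref{lem:localized-near-barycenter} to a $W_1$-bound: since $W_1 \leq W_2$ by Jensen's inequality applied to the defining coupling, and since $W_1(\mu, \delta_x) = \int_M \contdist(z, x) d\mu(z)$, the lemma gives
\[
  W_1(\mu_v, \delta_{b(\brkt v)}) \leq W_2(\mu_v, \delta_{b(\brkt v)}) = O(\sqrt{\disp(\mu_v)})
\]
uniformly in $\brkt v$, and similarly for $\brkt w$. The second step is to apply the triangle inequality twice:
\[
  \left| W_1(\mu_v, \mu_w) - W_1(\delta_{b(\brkt v)}, \delta_{b(\brkt w)}) \right| \leq W_1(\mu_v, \delta_{b(\brkt v)}) + W_1(\mu_w, \delta_{b(\brkt w)}),
\]
and the right-hand side is exactly $O(\sqrt{\disp(\mu_v)} + \sqrt{\disp(\mu_w)})$.

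The final step is to identify $W_1(\delta_{b(\brkt v)}, \delta_{b(\brkt w)}) = \contdist(b(\brkt v), b(\brkt w))$, which is immediate since the only coupling of two Dirac masses is the product measure. Combining these three observations yields the proposition.

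I do not expect a genuine obstacle; the proof is essentially a bookkeeping exercise once Lemma~\ref{lem:localized-near-barycenter} is in hand. The only mild subtleties are (i) verifying that $\lambdist$ as defined here really is $W_1$ on the pushed-forward measures (which follows from the Kantorovich-Rubinstein duality used in the distance formula~\eqref{eq:distance}, applied to the Lipschitz test functions $C(M) \supset C^\infty(M)$), and (ii) handling the possible non-uniqueness of $b$, which is harmless by the second half of Lemma~\ref{lem:localized-near-barycenter}. Uniformity is automatic because every intermediate estimate is uniform.
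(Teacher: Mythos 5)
Your proposal is correct and matches the paper's own argument essentially verbatim: choose $b(\brkt{v})$ to be a $\phi$-barycenter of $\mu_v$, bound $\left| W_1(\mu_v,\mu_w) - W_1(\delta_{b(\brkt{v})},\delta_{b(\brkt{w})}) \right|$ by the triangle inequality for $W_1$, and control each term via $W_1 \leq W_2$ (Jensen) together with Lemma~\ref{lem:localized-near-barycenter}. No gaps; the remarks on non-uniqueness of the barycenter and on the identification of $\lambdist$ with $W_1$ are consistent with how the paper sets things up.
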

\begin{proof}
  Let $b$ assign a choice of $\phi$-barycenter to each $\mu_v$, $\brkt{v} \in \statespace$.

  Now let $\delta_{v}, \delta_w$ be the Dirac measures centered on $b(\brkt{v})$, $b(\brkt{w})$.
  Recall that the distance $\lambdist(\brkt{v}, \brkt{w})$ is the Kantorovich-Rubinstein distance $W_1(\mu_v, \mu_w)$ between $\mu_v$ and $\mu_w$, and $\contdist(b(\brkt{v}), b(\brkt{w})) = W_1(\delta_v, \delta_w)$.

  By the triangle inequality for the metric $W_1$, $W_1(\mu_v, \mu_w) - W_1(\delta_v, \delta_w) \leq W_1(\mu_v, \delta_v) + W_1(\delta_w, \mu_w)$ and similarly $W_1(\delta_v, \delta_w) - W_1(\mu_v, \mu_w) \leq W_1(\delta_v, \mu_v) + W_1(\mu_w, \delta_w)$, so that $\left| W_1(\mu_v, \mu_w) - W_1(\delta_v, \delta_w) \right| \leq W_1(\mu_v, \delta_{v}) + W_1(\mu_w, \delta_{w})$.
  
  Now, by the classical Jensen inequality $\int_M |f| d \mu \leq \sqrt{\int_M |f|^2 d \mu}$ we have $W_1(\mu_v, \delta_{v}) \leq W_2(\mu_v, \delta_{v})$ and we conclude that
  \begin{align*}
    \left| \lambdist(\brkt{v}, \brkt{w}) - \contdist(b(\brkt{v}), b(\brkt{w})) \right| &= \left| W_1(\mu_v, \mu_w) - W_1(\delta_v, \delta_w) \right| \\
    &\leq W_1(\mu_v, \delta_{v}) + W_1(\mu_w, \delta_{w}) \\
                                                                  &\leq W_2(\mu_v, \delta_{v}) + W_2(\mu_w, \delta_{w}) \\
    &= O(\sqrt{\disp(\mu_v)} + \sqrt{\disp(\mu_w)}),
  \end{align*}
  where the last line is Lemma~\ref{lem:localized-near-barycenter}.
\end{proof}

\subsection{Existence of localized states near any point}
\label{sec:explicit-small-dispersion}
Proposition~\ref{prop:statesarepoints} tells us that probability measures $\mu$ on $M$ of sufficiently small dispersion correspond well to their $\phi$-barycenters $b(\mu)$.
This holds in particular for the probability measures $\mu_v$ associated to $\brkt{v} \in \statespace$.
We would now like to estimate the converse, i.e. to show that each point $x$ corresponds to an element $\statemap(x) \in \statespace$ whose probability measure is of small dispersion.
Then, with $\statemap$ asymptotically an isometric embedding and $b \circ \statemap$ asymptotically the identity on $M$, we would rightly be able to say there is a picture of $M$ inside $\statespace$.

To simplify the asymptotic estimates we will introduce the notation $\tbigo{}$ common in computer science:

\begin{definition}
  Let $X$ be a set and consider functions $f \colon X \times \R_+ \to \C$, $g \colon X \times \R_+ \to \R_+$.
  We say that $f = \bigo{g}$ uniformly when there exist finite $C, r_0 > 0$ such that  $|f(x, r)| \leq C g(x, r)$ for all $r > r_0$ and all $x \in X$.
  We say that $f = \tbigo{g}$ uniformly when $f = \bigo{g |\log g|^s}$ uniformly for some $s \geq 0$.
\end{definition}

\begin{proposition} \label{prop:pointsarestates}
  Let $M$ be a \spinc{} manifold equipped with a Dirac-type operator $D$ on a Hermitian vector bundle $\pi \colon \spinorbundle \to M$.
  Then, there exists a family $\{\statemap\}_\Lambda$ of maps $\statemap: \projective{\spinorbundle} \to \statespace$ such that for all $\epsilon > 0$,
  \begin{itemize}
  \item $\lambdist(\statemap(v), \statemap(w)) = \contdist(\pi(v), \pi(w)) + \tbigo{\Lambda^{-1 }}$ uniformly.
  \item The dispersion $\disp(\mu)$ of the measure $\mu$ associated to $\statemap(v)$ is $\tbigo{\Lambda^{-2 }}$ uniformly.
  \item The maps $\statemap$ asymptotically invert $b$, in the sense that $\contdist(\pi(v), b(\statemap(v))) = \tbigo{\Lambda^{-1 }}$ uniformly and $\lambdist(\statemap(v)), \brkt{v}) = \tbigo{\sqrt{\disp(\mu_v)} + \Lambda^{-2}}$ uniformly whenever $b(\brkt{v}) = \pi(\statemap(v))$.
  \end{itemize}
\end{proposition}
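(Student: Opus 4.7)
The natural candidate is to take $\statemap(v)$ to be the projective class in $\statespace$ of the \emph{reproducing-kernel section} $y \mapsto p_\Lambda(y, x) v$, where $x = \pi(v)$ and $p_\Lambda \in \Gamma(\spinorbundle \boxtimes \spinorbundle^*)$ is the smooth Schwartz kernel of the finite-rank projection $P_\Lambda$. Equivalently, $\statemap(v) \in H_\Lambda$ is defined by the reproducing identity
\begin{equation*}
  \ip{\statemap(v), w}_{L^2} = \ip{v, w(x)}_{\spinorbundle_x} \qquad \text{for all } w \in H_\Lambda,
\end{equation*}
which is well posed since elements of $H_\Lambda$ are smooth and hence point evaluation is continuous. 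Scaling $v$ rescales $p_\Lambda(\cdot, x) v$ linearly, so the construction descends to a well-defined map $\projective{\spinorbundle} \to \statespace$.

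The key quantitative ingredient is a uniform second-moment estimate on $p_\Lambda$: one needs
\begin{equation*}
  \int_M \contdist(y, x)^2 \, |p_\Lambda(y, x) v|^2_{\spinorbundle_y}\, d\vol(y) \;=\; \tbigo{\Lambda^{n-2}} \, |v|^2_{\spinorbundle_x},
\end{equation*}
together with the matching lower normalization bound $\| p_\Lambda(\cdot, x) v \|^2_{L^2} = \ip{v, p_\Lambda(x, x) v}_{\spinorbundle_x} \geq c \Lambda^n \, |v|^2_{\spinorbundle_x}$, both uniformly in $x \in M$ and $v \in \spinorbundle_x$. These follow from a local Weyl law / Hörmander-type asymptotic for the spectral function, usually established by representing $\chi_{[-\Lambda, \Lambda]}(D)$ as a Fourier integral operator built from $e^{itD}$ (finite propagation speed plus stationary phase), and using that the principal symbol of a Dirac-type operator squares to $|\xi|^2 \id$ so that the leading Weyl coefficient on the diagonal is a positive scalar multiple of the identity on each fiber. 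The logarithmic slack absorbed by $\tbigo{}$ comes from the fact that a sharp cut-off produces only polynomial, oscillatory off-diagonal decay of $p_\Lambda$. Dividing the two estimates yields $\int_M \contdist(y,x)^2 \, d\mu_{\statemap(v)}(y) = \tbigo{\Lambda^{-2}}$, and since $\embedding$ is bi-Lipschitz this translates into $\disp(\mu_{\statemap(v)}) = \tbigo{\Lambda^{-2}}$ uniformly, establishing the second bullet.

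The other bullets then follow combinatorially. By Jensen's inequality the second-moment bound also gives $\|E_{\mu_{\statemap(v)}}[\embedding] - \embedding(\pi(v))\|_{\R^n} = \tbigo{\Lambda^{-1}}$, so the bi-Lipschitz property of $\embedding$ forces $\contdist(b(\statemap(v)), \pi(v)) = \tbigo{\Lambda^{-1}}$, which is the first clause of the third bullet. Applying Proposition~\ref{prop:statesarepoints} to the pair $(\statemap(v), \statemap(w))$ and using $\sqrt{\disp(\mu_{\statemap(v)})} = \tbigo{\Lambda^{-1}}$ together with the triangle inequality $\contdist(b(\statemap(v)), b(\statemap(w))) = \contdist(\pi(v), \pi(w)) + \tbigo{\Lambda^{-1}}$ yields $\bigl|\lambdist(\statemap(v), \statemap(w)) - \contdist(\pi(v), \pi(w))\bigr| = \tbigo{\Lambda^{-1}}$, the first bullet. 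The remaining clause of the third bullet is Proposition~\ref{prop:statesarepoints} applied to $(\statemap(v), \brkt{v})$ under the assumed equality of barycenters, again using $\disp(\mu_{\statemap(v)}) = \tbigo{\Lambda^{-2}}$.

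The principal obstacle is the second-moment kernel estimate itself. Unlike for the heat kernel, the sharp cut-off $\chi_{[-\Lambda, \Lambda]}$ endows $p_\Lambda$ with only polynomial, oscillatory off-diagonal decay, rather than Gaussian decay; extracting the $\Lambda^{n-2}$ scaling of the weighted integral uniformly in $x$ and $v$ is where the microlocal input must be applied with care, and is what forces the $\tbigo{}$ notation to absorb logarithmic corrections throughout. Uniformity in $v$ also relies on the scalar nature of the leading-order diagonal symbol, without which the lower normalization bound need not hold uniformly across all fiber directions.
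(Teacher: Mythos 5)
Your reduction of the first and third bullets to the second (via Proposition~\ref{prop:statesarepoints}, Jensen, and the bi-Lipschitz property of $\embedding$) matches the paper's logic and is fine. The problem is the construction itself: the paper does \emph{not} use the raw reproducing kernel $y \mapsto p_\Lambda(y,x)v$ of the sharp projection $P_\Lambda$, and for good reason — your key second-moment estimate is false for that kernel, not merely delicate. Already on flat $\R^m$ the spectral function of a Dirac-type operator at cutoff $\Lambda$ behaves like $\Lambda^m (\Lambda r)^{-(m+1)/2}$ times an oscillation for $\Lambda r \gg 1$, so $|p_\Lambda(y,x)v|^2 \sim \Lambda^{m-1} r^{-(m+1)}$ off the diagonal. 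The weighted integral then splits as
\begin{equation*}
  \int_{r \lesssim \Lambda^{-1}} r^2\,|p_\Lambda|^2\, r^{m-1}\,dr \;\sim\; \Lambda^{m-2},
  \qquad
  \int_{\Lambda^{-1} \lesssim r \lesssim 1} r^2\,|p_\Lambda|^2\, r^{m-1}\,dr \;\sim\; \Lambda^{m-1}\!\int_{\Lambda^{-1}}^{1}\! dr \;\sim\; \Lambda^{m-1},
\end{equation*}
so the oscillatory tail \emph{dominates} and the normalized second moment is $\Theta(\Lambda^{-1})$, not $\tbigo{\Lambda^{-2}}$. The discrepancy is a full power of $\Lambda$, which the $\tbigo{\cdot}$ notation (logarithmic slack only) cannot absorb. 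With your states you would only get $\disp = O(\Lambda^{-1})$ and $W_2(\mu,\delta_x) = O(\Lambda^{-1/2})$, which is too weak for every bullet of the proposition.

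The paper circumvents exactly this by mollifying the cutoff: it defines $\statemap$ via the \emph{truncated, rescaled heat flow} $F_\Lambda(v_x) = (2\pi t_\Lambda)^{m/4}\sum_{|\lambda|\le\Lambda} e^{-t_\Lambda\lambda^2}\spectrallift{\lambda}(v_x)$ with $t_\Lambda = 2m\,\Lambda^{-2}\log\Lambda$. The Gaussian off-diagonal decay of $e^{-tD^2}$ gives a second moment of order $t_\Lambda = \tbigo{\Lambda^{-2}}$ (Lemmas~\ref{lemma:heat-kernel-fiberwise-inner-product} and~\ref{lemma:global-heat-kernel-lift-inner-product}), while the Gaussian spectral weights make the error of inserting $P_\Lambda$ exponentially small, of order $t^{1-2m}e^{-zt\Lambda^2}$ (Lemma~\ref{lemma:heat-kernel-versus-truncation-global-inner-product}, using the H\"ormander sup-bound on $\spectralkernel{\lambda}$); the choice of $t_\Lambda$ balances the two sources of error. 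If you want to salvage your approach you must replace $\chi_{[-\Lambda,\Lambda]}$ by a smoothed multiplier with rapidly decaying weights near the spectral edge — at which point you have essentially rediscovered the paper's construction.
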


The proof depends on a balanced rescaling of the truncated heat flow and will be presented at the end of this section.

To discuss the geometric consequences of Proposition~\ref{prop:pointsarestates}, we will introduce notation for the small-dispersion subset into which $\statemap$ maps.

\begin{definition}
  Let $\epsilon > 0$.
  Then, $\estatespace{\epsilon} \subset \statespace$ consists of those $\brkt{v} \in \statespace$ for which $\disp(\mu_v) < \epsilon$.
\end{definition}

\begin{corollary} \label{cor:g-h-approximation}
  As $\Lambda \to \infty$, there exists $\epsilon = \tbigo{\Lambda^{-1}}$ such that the Gromov-Hausdorff distance between $M$ and the space $\estatespace{\epsilon^2}$, equipped with the metric $\lambdist$, is $O(\epsilon)$.
\end{corollary}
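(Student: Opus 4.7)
The plan is to bound $d_{GH}(M, \estatespace{\epsilon^2})$ by constructing an explicit correspondence $R \subset M \times \estatespace{\epsilon^2}$ whose distortion is $O(\epsilon)$, interpolating between the maps $\statemap$ of Proposition~\ref{prop:pointsarestates} and $b$ of Proposition~\ref{prop:statesarepoints}. The result then follows from the standard inequality $d_{GH} \leq \tfrac{1}{2}\,\mathrm{dis}(R)$.

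First, I would fix $\epsilon = \tbigo{\Lambda^{-1}}$ large enough that the uniform bound $\disp(\mu_{\statemap(v)}) = \tbigo{\Lambda^{-2}}$ from the second bullet of Proposition~\ref{prop:pointsarestates} guarantees $\disp(\mu_{\statemap(v)}) < \epsilon^2$ for every $v \in \projective{\spinorbundle}$. This automatically gives $\statemap(\projective{\spinorbundle}) \subset \estatespace{\epsilon^2}$ and, in particular, non-emptiness. For each $x \in M$ pick an arbitrary lift $v_x \in \pi^{-1}(x)$ and define
\begin{equation*}
R := \{(x, \statemap(v_x)) : x \in M\} \cup \{(b(\brkt{w}), \brkt{w}) : \brkt{w} \in \estatespace{\epsilon^2}\}.
\end{equation*}
Both coordinate projections of $R$ are surjective, so $R$ is a correspondence.

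Second, I would estimate the distortion of $R$. Given pairs $(x_1, \brkt{w}_1), (x_2, \brkt{w}_2) \in R$, each $\brkt{w}_i \in \estatespace{\epsilon^2}$ satisfies $\disp(\mu_{w_i}) < \epsilon^2$, so Proposition~\ref{prop:statesarepoints} yields
\[
|\lambdist(\brkt{w}_1, \brkt{w}_2) - \contdist(b(\brkt{w}_1), b(\brkt{w}_2))| = O(\sqrt{\disp(\mu_{w_1})} + \sqrt{\disp(\mu_{w_2})}) = O(\epsilon).
\]
For pairs of the first kind the third bullet of Proposition~\ref{prop:pointsarestates} gives $\contdist(x_i, b(\brkt{w}_i)) = \tbigo{\Lambda^{-1}}$, while for pairs of the second kind $x_i = b(\brkt{w}_i)$ by construction. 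Two applications of the triangle inequality then produce
\[
|\contdist(x_1, x_2) - \lambdist(\brkt{w}_1, \brkt{w}_2)| \leq O(\epsilon) + \tbigo{\Lambda^{-1}} = O(\epsilon),
\]
establishing the bound on distortion and hence on $d_{GH}$.

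The only genuine subtlety, and the reason the statement reads $\epsilon = \tbigo{\Lambda^{-1}}$ rather than $\epsilon = O(\Lambda^{-1})$, lies in the joint bookkeeping of the polylogarithmic factors: the implicit log-exponent $s$ in $\disp(\mu_{\statemap(v)}) = O(\Lambda^{-2} |\log \Lambda|^{2s})$ dictates the smallest admissible $\epsilon$, which must absorb the symmetric polylog appearing in the $\tbigo{\Lambda^{-1}}$ barycenter error. Once $\epsilon$ is chosen to dominate both, the remainder is a routine assembly of the two propositions via the triangle inequality.
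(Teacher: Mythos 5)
Your argument is correct, and it rests on exactly the same two ingredients as the paper's proof: $\statemap$ (with the dispersion and barycenter bounds of Proposition~\ref{prop:pointsarestates}) supplies the map from $M$ into $\estatespace{\epsilon^2}$, and $b$ (with Proposition~\ref{prop:statesarepoints}) supplies the approximate inverse. The only difference is the formal device used to convert these into a Gromov--Hausdorff bound. You build a correspondence $R$ and invoke $d_{GH} \leq \tfrac12 \mathrm{dis}(R)$; the paper instead exploits the fact that both spaces already sit isometrically inside a single ambient metric space --- the probability measures on $M$ under $W_1$, via $x \mapsto \delta_x$ and $\brkt{v} \mapsto \mu_v$ --- and simply bounds the Hausdorff distance there by showing $W_1(\mu_v, \delta_{b(\brkt{v})}) = O(\epsilon)$ in one direction and $W_1(\mu_{\statemap(x)}, \delta_x) = O(\epsilon)$ in the other. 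The two routes are equivalent characterizations of $d_{GH}$ up to a harmless constant; the paper's is marginally shorter here because the common embedding is already built into the definitions of $\lambdist$ and $\contdist$, while yours is more self-contained and makes the distortion bookkeeping explicit. One point in your favour: you explicitly choose $\epsilon$ large enough to dominate \emph{both} the polylogarithmic factor in $\sup_v \disp(\mu_{\statemap(v)})^{1/2}$ and the one in the barycenter error $\contdist(x, b(\statemap(x))) = \tbigo{\Lambda^{-1}}$, whereas the paper fixes $\epsilon^2 = \sup_x \disp(\mu_{\statemap(x)})$ and then asserts $\tbigo{\Lambda^{-1}} = O(\epsilon)$ without comment; since the statement only requires \emph{some} $\epsilon = \tbigo{\Lambda^{-1}}$, your choice is the cleaner way to discharge that step.
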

\begin{proof}
  Let $\epsilon^2 = \sup_{x \in M} \disp(\mu_{\phi_\Lambda(x)})$.
  By the second part of Proposition~\ref{prop:pointsarestates}, $\epsilon^2 = \tbigo{\Lambda^{-2} }$.

  Now, the map $\brkt{v} \to \mu_v$ sends $\estatespace{\epsilon^2}$ isometrically into the space of probability measures on $M$ with the Kantorovich-Rubinstein metric $W_1$, and the map $x \mapsto \delta_x$ sends $M$ isometrically into the same space.

  For $\brkt{v} \in \estatespace{\epsilon^2}$, there is a point $x = b(\brkt{v})$ in $M$ such that $W_1(\mu_v, \delta_x) = O(\epsilon)$ by Proposition~\ref{prop:statesarepoints}.
  
  For $x \in M$, there is an element $\brkt{v} = \statemap(x)$ of $\estatespace{\epsilon^2}$ that satisfies $\contdist(x, b(\brkt{v})) = \tbigo{\Lambda^{-1}} = O(\epsilon)$ by the third part of Proposition~\ref{prop:pointsarestates}.
  Let $\delta_v$ be the Dirac measure centered at $b(\brkt{v})$, so that $W_1(\delta_v, \delta_x) = \contdist(x, b(\brkt{v}))$.
  Now, $W_1(\mu_v, \delta_x) \leq W_1(\mu_v, \delta_v) + W_1(\delta_v, \delta_x)$, and $W_1(\mu_v, \delta_v) = O(\sqrt{\disp(\mu_v)}) = O(\epsilon)$ by Proposition~\ref{prop:statesarepoints}, so that indeed $W_1(\mu_v, \delta_x) = O(\epsilon)$.
  We conclude that the Hausdorff distance between $M$ and $\estatespace{\epsilon^2}$, as subsets of the space of probability measures on $M$, is $O(\epsilon)$.
\end{proof}

\subsubsection{Proof of Proposition~\ref{prop:pointsarestates}}
Recall the spectral triple $(C^\infty(M), L^2(M, \spinorbundle), D)$ associated to $M$, where $\spinorbundle$ is a spinor bundle over $M$ and $D$ is a Dirac-type operator on $\spinorbundle$.

We will define a localization map $F_\Lambda \colon \spinorbundle \to P_\Lambda H$ such that

\[
  \ip{F_\Lambda(v_x), a F_\Lambda(w_x)} = \spip{v_x, a_x w_x} + \norm{v_x} \norm{w_x} O \left(\norm{a} \Lambda^{-2 } + \lip{k}{a}{x} \Lambda^{-k } \right)
\]
for all $\epsilon > 0$, whenever $v_x, w_x \in \spinorbundle_x$ and $a \in \Gamma(\End \spinorbundle)$.
If $\Psi_{xy}$ is the parallel transport map from $\spinorbundle_y$ to $\spinorbundle_x$, the constant $\lip{k}{a}{x}$ is defined to be $\sup_{y \colon d(x,y) \leq \rho} \norm{a_x - \Psi_{xy}^* a_y} / d(x, y)^k$.

To do so, we will take the element $v_x \in \spinorbundle_x$ and use the short-time heat flow associated to the Laplace-type operator $D^2$ to obtain a smooth section $y \mapsto \pk{t}{x}{y}(v_x)$ of $\spinorbundle$, which then corresponds to an element of $H$.
The known estimates on heat asymptotics will allow us to bound the dispersion of $\pk{t}{x}{y}(v_x)$ for small $t$.
Then, the fact that $\pliftmap{t}$ is the heat kernel associated to $D^2$ whereas $P_\Lambda$ is an associated projection, will allow us to control the behaviour of $(1 - P_\Lambda) \pk{t}{x}{y}(v_x)$.

\begin{definition}
  Let $v_x \in \spinorbundle_x$, for $x \in M$.
  Then $\plift{t}{v_x}$ is the section $y \mapsto \pk{t}{x}{y}(v_x)$ of $\spinorbundle$, where $\pliftmap{t}$ is the integral kernel associated to the operator $e^{- t D^2}$.
\end{definition}

The following Lemma allows us to control the leading term in the short-time behaviour of the heat flow $\plift{t}{v_x}$.
To that end, let $\hk{t}{x}{y}$ equal the scalar coefficient $e^{-d_M(x, y)^2 / 4t} (4 \pi t)^{-m/2}$ of the leading term in the asymptotics of the heat kernel.
For $x \in M$ and $s \in \R$, let $B_s(x) \subset M$ be the metric ball of radius $s$ around $x$.

\begin{lemma}
  \label{lemma:heat-kernel-fiberwise-inner-product}
  Let $a \in \Gamma(\End \spinorbundle)$.
  Then, we have for all $s$ smaller than the injectivity radius of $M$, and all $v, w \in \Gamma(\spinorbundle)$,
  \begin{align*}
    \int_{B_s(x)} \hk{t}{x}{y} \spip{v_x,  \Psi_{xy}^* a_y w_x} d y =& \spip{v_x, a_x w_x} \int_{B_s(x)} \hk{t}{x}{y} dy + \\
    &+ \norm{v} \norm{w} \lip{k}{a}{x} O(t^{k/2} + s^{-2} t^{(k+2)/2}), \\
    \int_{B_s(x)} \hk{t}{x}{y} dy =& 1 + O(t + s^{-4} t^{2}),
  \end{align*}
  uniformly in $v_x, a, x \in M$.
  
\end{lemma}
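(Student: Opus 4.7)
The plan is to reduce both estimates to one-dimensional Gaussian moment computations by passing to geodesic polar coordinates around $x$, which is permissible since $s$ is smaller than the injectivity radius. Writing $y = \exp_x(r\theta)$ with $\theta \in S^{m-1}$, we have $d_M(x,y) = r$, and by a standard Jacobi-field expansion the Riemannian volume form factors as $\sqrt{\det g}(r, \theta)\,dr\,d\theta = r^{m-1}(1 + O(r^2))\,dr\,d\theta$, with the error uniform in $\theta$ and, by compactness of $M$, in $x$.

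For the second identity, write
\[
\int_{B_s(x)} \hk{t}{x}{y}\, dy = (4\pi t)^{-m/2} \int_{S^{m-1}} \int_0^s e^{-r^2/4t}\bigl(r^{m-1} + O(r^{m+1})\bigr)\, dr\, d\theta.
\]
Extending the leading contribution to $r \in [0, \infty)$ recovers the total Euclidean Gaussian mass $1$. The Jacobian correction $\int_0^\infty r^{m+1} e^{-r^2/4t}\, dr$ is a standard Gaussian moment of order $t^{(m+2)/2}$, so after normalization by $(4\pi t)^{-m/2}$ this contributes $O(t)$. The tail beyond $r = s$ is controlled by a Chebyshev-type argument: for any integer $j \geq 0$, $\int_s^\infty r^{m-1+2j} e^{-r^2/4t}\, dr \leq s^{-2j}\int_0^\infty r^{m-1+2j} e^{-r^2/4t}\, dr = O(t^{m/2+j})$, so the normalized tail is $O(s^{-2j} t^{j})$. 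Taking $j = 2$ yields the claimed $O(s^{-4} t^2)$.

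For the first identity, split pointwise as $\spip{v_x, \Psi_{xy}^* a_y w_x} = \spip{v_x, a_x w_x} + \spip{v_x, (\Psi_{xy}^* a_y - a_x)w_x}$. The first term factors out of the integral and yields the stated principal contribution. For the second, Cauchy-Schwarz in the fibre $\spinorbundle_x$ together with the defining estimate $\norm{\Psi_{xy}^* a_y - a_x} \leq \lip{k}{a}{x}\, d_M(x,y)^k$ gives the pointwise bound
\[
\bigl|\spip{v_x, (\Psi_{xy}^* a_y - a_x) w_x}\bigr| \leq \norm{v_x}\norm{w_x}\, \lip{k}{a}{x}\, d_M(x,y)^k,
\]
valid as long as $d_M(x,y) \leq \rho$. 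If $s > \rho$, split the integration at $r = \rho$; the outer region contributes only an exponentially small Gaussian tail which is absorbed into any polynomial error. The error thus reduces to estimating $\int_{B_s(x)} \hk{t}{x}{y}\, d_M(x,y)^k\, dy$, an $r^k$-weighted Gaussian integral in polar coordinates: the full $[0,\infty)$ integral is $O(t^{k/2})$ by direct Gaussian moment computation, the $s$-tail is $O(s^{-2} t^{(k+2)/2})$ by Chebyshev with $j=1$, and the Jacobian correction yields a higher-order $O(t^{(k+2)/2})$ term absorbed into $O(t^{k/2})$.

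The main technical hurdle is ensuring uniformity of every implicit constant in $x \in M$, in the fibre vectors, and in the endomorphism section $a$ (with $\lip{k}{a}{x}$ as the only $a$-dependence). This rests on the compactness of $M$: it yields a uniform positive lower bound on the injectivity radius and uniform control of the $O(r^2)$ Jacobi-field remainder in the volume expansion. A minor subtlety is reconciling the integration radius $s$ with the radius $\rho$ implicit in the definition of $\lip{k}{a}{x}$, handled as above.
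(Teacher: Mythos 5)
Your proof is correct and follows essentially the same route as the paper's: pull back through the exponential map to geodesic normal/polar coordinates, absorb the $O(r^2)$ volume-form correction, control the tail beyond radius $s$ by a Chebyshev-type moment bound, and reduce the endomorphism error to the $d(x,y)^k$-weighted Gaussian moment via the defining estimate for $\lip{k}{a}{x}$. The only (immaterial) difference is that you evaluate the odd moments directly as radial Gamma-function integrals, whereas the paper handles them via the Cauchy--Schwarz/Jensen bound $m_{t,s,k}^2 \leq m_{t,s,0}\, m_{t,s,2k}$ combined with Isserlis' theorem for the even moments; you are in fact slightly more careful than the paper about the radius $\rho$ in the definition of $\lip{k}{a}{x}$.
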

\begin{proof}
  For $k \geq 0$, consider the integral $m_{t, s, k}(x) \eqdef \int_{B_s(x)} \hk{t}{x}{y} d^{k}(x, y) dy$.
  Let $m'_{t, s, k} \eqdef (4 \pi t)^{-m/2} \int_{\norm{y} \leq s} e^{- \norm{y}^2/4t} \norm{y}^{k} dy$.
  There exists a global constant $C$ such that the pullback of the volume form on $M$ is bounded by $C \norm{y}^2$ times the Euclidean volume form.
Thus, pulling back our integral through the exponential map at $x$, we have $|m_{t, s, k}(x) - m'_{t, s, k}| \leq C m'_{t, s, k+2}$.

By Chebyshev's inequality, we have $m'_{t, s, 2k} = (4 \pi t)^{-m/2} \int e^{- \norm{y}^2 / 4t} \norm{y}^{2k} dy + O(t^{-m/2} s^{-4} \int_{\complement B_s(0)} e^{- \norm{y}^2/4t} \norm{y}^{2k+4} dy)$, where $\complement$ denotes the complement.
With Isserlis' theorem to calculate the full Gaussian integrals, we see that $m'_{t, s, 2k} =  c_k t^{k} + O(s^{-4} t^{k+2})$ for all $k$. Thus,$m_{t, s, 2k}(x) = c_k t^{k} + O(t^{k+1}) + O(s^{-4} t^{k+2})$.

Now, estimate $\left| \spip{v_x, \Psi_{xy}^* a_y w_x} - \spip{v_x, a_x v_x} \right| \leq
\lip{k}{a}{x} d(x, y)^k \norm{v_x} \norm{w_x}$.

As $m_{t, s, k}^2 \leq m_{t, s, 0} m_{t, s, 2k}$ by the classical Jensen's inequality, we conclude that, with $\sqrt{m_{t, s, 0}(x) m_{t, s, k}(x)} = O(t^{k/2} + s^{-2} t^{k/2})$, the remaining error is $O(\norm{v_x} \norm{w_x} \lip{k}{a}{x} (t^{k/2} + s^{-2} t^{(k+2)/2}))$ uniformly.
\end{proof}

We are now in a position to show that the rescaled heat flow $(2 \pi t)^{m/4} \pliftmap{t} \colon \spinorbundle_x \to H$ is asymptotically isometric, in the following sense:

\begin{lemma}
  \label{lemma:global-heat-kernel-lift-inner-product}
  For $a \in \Gamma(\End \spinorbundle)$ and $v,w \in \Gamma(s)$, we have uniformly
  \begin{align*}
    \ip{\plift{t}{v_x}, a \plift{t}{w_x}} =& (2 \pi t)^{- m / 2} \spip{v_x, a_x w_x}  + \\
                               &+ \norm{v_x} \norm{w_x} O(\lip{k}{a}{x} t^{(k-m)/2} + \norm{a} t^{(2-m)/2})
\end{align*}
\end{lemma}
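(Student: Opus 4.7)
The plan is to expand
\[
\ip{\plift{t}{v_x}, a \plift{t}{w_x}} = \int_M \left( \pk{t}{x}{y}(v_x), a_y \pk{t}{x}{y}(w_x) \right)_{\spinorbundle_y} dy
\]
by inserting the standard short-time asymptotic expansion of the heat kernel $\pk{t}{x}{y}$ and reducing the resulting leading integral to Lemma~\ref{lemma:heat-kernel-fiberwise-inner-product}. Throughout, compactness of $M$ ensures that every constant encountered can be taken uniform in $x$, $v_x$, $w_x$, $a$.

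First, I would split the integral as $\int_{B_s(x)} + \int_{M \setminus B_s(x)}$, fixing $s$ to be a small constant fraction of the injectivity radius of $M$; the $s^{-j}$ prefactors appearing in Lemma~\ref{lemma:heat-kernel-fiberwise-inner-product} are then absorbed into uniform constants. On the near-diagonal piece, I would invoke the Minakshisundaram--Pleijel expansion
\[
\pk{t}{x}{y} = \hk{t}{x}{y}\left(u_0(x,y) + t\, R(x,y,t)\right),
\]
where $u_0(x,y) = \Psi_{yx}\left(1 + O(d(x,y)^2)\right)$ (in particular $u_0(x,x) = \id$) and $R$ is smooth and uniformly bounded for $t$ in a bounded interval. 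Substituting and using the unitarity of parallel transport, the integrand becomes $(\hk{t}{x}{y})^2\, \spip{v_x, \Psi_{xy}^* a_y w_x}$ modulo a correction of order $(\hk{t}{x}{y})^2 \cdot (d(x,y)^2 + t)\cdot \|a\|\, \|v_x\|\|w_x\|$.

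The key algebraic move is to rewrite the Gaussian $(\hk{t}{x}{y})^2$, whose exponential variance is $t$ rather than $2t$, as a scalar multiple of $\hk{t/2}{x}{y}$. Up to that explicit $t^{-m/2}$ prefactor, the leading integral has exactly the form to which Lemma~\ref{lemma:heat-kernel-fiberwise-inner-product} applies (with $t$ replaced by $t/2$). Applying that lemma and multiplying through by the prefactor yields the stated leading term, together with an error proportional to $t^{-m/2} \cdot \lip{k}{a}{x} t^{k/2} = \lip{k}{a}{x} t^{(k-m)/2}$. The $(d(x,y)^2 + t)$ correction from the heat kernel expansion, once integrated against $(\hk{t}{x}{y})^2$, contributes $O(t \cdot t^{-m/2}) = O(t^{(2-m)/2})$, accounting for the $\norm{a} t^{(2-m)/2}$ error term. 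The far-from-diagonal piece $\int_{M \setminus B_s(x)}$ is controlled by the classical off-diagonal Gaussian upper bound $|\pk{t}{x}{y}| \leq C t^{-m/2} \exp(-c\, d(x,y)^2/t)$ valid globally on compact manifolds; with $s$ fixed and $d(x,y) \geq s$, this yields a contribution of order $t^{-m/2} e^{-cs^2/t}$, which decays faster than any power of $t$ and is absorbed into the stated error.

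The main obstacle is essentially bookkeeping: one must propagate the rescaling $(\hk{t}{x}{y})^2 = \text{const}\cdot t^{-m/2}\, \hk{t/2}{x}{y}$ carefully through Lemma~\ref{lemma:heat-kernel-fiberwise-inner-product} and choose $s$ so that the Gaussian tail dominates both the $s^{-j}$ errors of that lemma and the polynomial prefactor of the off-diagonal heat kernel bound. Once $s$ is fixed, the remaining estimates are routine asymptotic bookkeeping uniform in the data.
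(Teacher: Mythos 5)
Your proposal is correct and follows essentially the same route as the paper's proof: split the integral into a near-diagonal ball and its complement, use the leading-order heat-kernel asymptotics and unitarity of parallel transport on the ball, rewrite $(\hk{t}{x}{y})^2$ as a power of $t$ times a rescaled Gaussian so that Lemma~\ref{lemma:heat-kernel-fiberwise-inner-product} applies, and control the far region by Gaussian decay. The only differences are cosmetic: the paper takes $s = t^{1/4}$ rather than a fixed fraction of the injectivity radius, and your identity $(\hk{t}{x}{y})^2 = \mathrm{const}\cdot t^{-m/2}\,\hk{t/2}{x}{y}$ is in fact the corrected form of the paper's $(\hk{t}{x}{y})^2 = (2\pi t)^{-m/2}\hk{2t}{x}{y}$, whose exponent is garbled.
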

\begin{proof}

  It is well-known, see e.g. \cite[Theorem 2.30]{MR2273508}, that there exist a nonzero radius $s$ around $x$ such that for $d_{M}(x, y) < s$, one has $\pk{t}{x}{y}(v_x) = \hk{t}{x}{y}(\Psi_{xy}(v_x) + O(t))$   as $t \to 0$, where $\Psi$ is the parallel transport along the spinor connection.
  Therefore,
  \[
    \spip{\pk{t}{x}{y}(v_x), a_y \pk{t}{x}{y}(w_x)} = (\hk{t}{x}{y})^2 \spip{v_x, \Psi_{xy}^* a_y w_x} + O(t (\hk{t}{x}{y})^2 \norm{a} \norm{v_x} \norm{w_x}),
  \]
  uniformly in $x$.
  Moreover, for $d_{M}(x, y) > s$, one has $\spip{\pk{t}{x}{y}(v_x), a_y \pk{t}{x}{y}(w_x)} = O((\hk{t}{x}{y})^2 \norm{a} \norm{v_x} \norm{w_x})$.

  Now, outside an $s$-ball around $x$, we have
  \[
    \int_{\complement B_s(x)}     \spip{\pk{t}{x}{y}(v_x), a_y \pk{t}{x}{y}(w_x)} dy  = O(e^{s^2/2t} t^{-m} \norm{a} \norm{v_x} \norm{w_x}),
  \]
  as $t \to 0$.  Now set $s \eqdef t^{1/4}$ and note that $(\hk{t}{x}{y})^2 = (2 \pi t)^{-m/2} \hk{2t}{x}{y}$.
  The estimate of Lemma~\ref{lemma:heat-kernel-fiberwise-inner-product} on the integral over $B_s(x)$ then completes the proof.
\end{proof}

In order to estimate the scaling of the truncated heat flow $P_\Lambda \plift{t}{v_x}$ with $\Lambda$, we will relate $\plift{t}{v_x}$ to the spectral resolution of $D^2$, as follows.

\begin{definition}
  Let $P_\lambda$ be the projection onto the $\lambda$-eigenspace of the first-order elliptic differential operator $D$ and let $\spectralkernel{\lambda}$ be its integral kernel, so that for all sections $v$ of $\spinorbundle$ we have
  \[
    P_\lambda(v)(y) = \int_M \spectralkernel{\lambda}(x, y) (v_x) dx.
  \]
  Then, $\spectrallift{\lambda} \colon \spinorbundle \to \Gamma(\spinorbundle)$ is the associated lifting $\spectrallift{\lambda}(v_x) \colon y \mapsto \spectralkernel{\lambda}(x, y)(v_x)$.
\end{definition}

In particular, we have $\pliftmap{t} = \sum_\lambda e^{-t \lambda^2} \spectrallift{\lambda}$ weakly.
To estimate the $L^2$-norm of $\spectrallift{\lambda}(v_x)$, we will need the following classical result by Hörmander\cite{MR609014}.

\begin{theorem}[{\cite[Theorem 4.4]{MR609014}}]
  \label{theorem:sup-bound-on-evals}
  There exists a constant $C$ such that
  \[
    \sup_{x, y \in M} \norm{\spectralkernel{\lambda}(x, y)} \leq C (1 + |\lambda|)^{\dim M - 1}
  \]
  uniformly in $x, y, \lambda$. In particular, there exists a constant $c$ such that
  \[
    \norm{\spectrallift{\lambda}(v_x)}_H \leq c (1 + |\lambda|)^{\dim M - 1} \norm{v_x}_{\spinorbundle_x},
  \]
  for all $v_x \in \spinorbundle$, all $x \in M$ and all $\lambda \in \sigma(D) \subset R$.
\end{theorem}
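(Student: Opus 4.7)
The theorem is quoted from Hörmander \cite{MR609014}; I sketch how one would prove it. The statement has two parts: the pointwise bound on the operator norm of the integral kernel, which is the substance of the result, and the induced $L^2$-bound on $\spectrallift{\lambda}(v_x)$, which follows by integration.

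\emph{Reduction to the diagonal.} Since $D$ is self-adjoint with discrete spectrum and $\spectralkernel{\lambda}$ is the integral kernel of the finite-rank projection onto the $\lambda$-eigenspace, I would write $\spectralkernel{\lambda}(x,y) = \sum_{j} \phi_j(x) \otimes \phi_j(y)^*$ with $\{\phi_j\}$ an orthonormal basis of this eigenspace, so that as an operator $\spinorbundle_y \to \spinorbundle_x$ it acts by $v_y \mapsto \sum_j \phi_j(x) \langle \phi_j(y), v_y \rangle$. Applying the Cauchy--Schwarz inequality to the resulting sum of rank-one operators yields
\[
  \norm{\spectralkernel{\lambda}(x,y)} \leq \sum_j \norm{\phi_j(x)} \norm{\phi_j(y)} \leq \sqrt{\tr \spectralkernel{\lambda}(x,x) \cdot \tr \spectralkernel{\lambda}(y,y)},
\]
so it suffices to prove the diagonal estimate $\tr \spectralkernel{\lambda}(x,x) \leq C(1+|\lambda|)^{n-1}$ uniformly in $x$, with $n = \dim M$.

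\emph{Pointwise Weyl law via the half-wave propagator.} The diagonal estimate is the pointwise local Weyl law for the first-order elliptic operator $D$. The classical route is to analyse the half-wave group $U(t) \eqdef e^{itD}$, whose Schwartz kernel has on-diagonal expansion $U(t,x,x) = \sum_{\mu \in \sigma(D)} e^{it\mu} \tr \spectralkernel{\mu}(x,x)$. For $|t|$ smaller than the injectivity radius (and half the first positive return time of the geodesic flow at $x$), $U$ is a Fourier integral operator associated to the geodesic flow on the cosphere bundle, and stationary phase applied to $U(t,x,x)$ against a smooth time cut-off produces an asymptotic expansion whose leading term has size $|\lambda|^{n-1}$. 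A Tauberian argument then converts this short-time information into the windowed upper bound
\[
  \sum_{\mu \,:\, |\mu - \lambda|\leq 1} \tr \spectralkernel{\mu}(x,x) = O((1+|\lambda|)^{n-1})
\]
uniformly in $x$, and since $\tr \spectralkernel{\lambda}(x,x)$ is a single non-negative summand on the left, the diagonal bound follows. The main obstacle is precisely this microlocal-plus-Tauberian machinery, which is classical and carried out in \cite[\S 4]{MR609014}; at the level of this sketch we invoke it.

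\emph{Second inequality by integration.} Given the kernel bound with constant $C$, Fubini together with the trivial estimate $\norm{\spectralkernel{\lambda}(x,y) v_x} \leq \norm{\spectralkernel{\lambda}(x,y)}\, \norm{v_x}$ yields
\[
  \norm{\spectrallift{\lambda}(v_x)}_H^2 = \int_M \norm{\spectralkernel{\lambda}(x,y) v_x}_{\spinorbundle_y}^2\, dy \leq \vol(M)\, C^2 (1+|\lambda|)^{2(n-1)} \norm{v_x}_{\spinorbundle_x}^2,
\]
and taking square roots with $c \eqdef C \sqrt{\vol(M)}$ produces the stated $L^2$-bound.
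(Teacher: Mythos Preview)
Your proposal is correct. The paper does not supply its own proof of this theorem at all: the first inequality is simply quoted from H\"ormander, and the second is signalled by ``In particular'' without further argument. Your sketch of the wave-kernel/Tauberian route is exactly the standard argument behind the cited result, and your derivation of the $L^2$-bound by integrating the pointwise kernel estimate over $M$ is precisely what that ``In particular'' encodes. One minor remark: since $P_\lambda$ is a self-adjoint projection one actually has $\norm{\spectrallift{\lambda}(v_x)}_H^2 = \spip{v_x, \spectralkernel{\lambda}(x,x) v_x}$, which would give the sharper exponent $(\dim M - 1)/2$; but the paper only states and uses the weaker bound, so your argument matches it.
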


Due to the polynomial scaling of the fiberwise inner product, we can now show that the exponential dependence of $(1 - P_\Lambda) \plift{t}{v_x}$ on $\Lambda$ implies that we retain the asymptotic properties of the heat flow when we truncate such that $\Lambda^{2} t = c \log \Lambda$ for sufficiently large $c$.

\begin{lemma}
  \label{lemma:heat-kernel-versus-truncation-global-inner-product}
  We have for all $a \in B(H)$,
  \[
    \left| \ip{\plift{t}{v_x}, (a - P_\Lambda a P_\Lambda) \plift{t}{w_x}}_H \right| = O(\norm{v_x} \norm{w_x} \norm{a} t^{1-2m} e^{-t z \Lambda^2}),
  \]
  for all fixed $0 \leq z < 1$, uniformly in $v_x, w_x \in \spinorbundle_x, x \in M$, as $\Lambda \to \infty$.
\end{lemma}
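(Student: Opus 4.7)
The idea is to combine the spectral expansion of $\plift{t}{v_x}$ as a sum over eigenvalues of $D$ with the polynomial-in-$\lambda$ bound from Theorem~\ref{theorem:sup-bound-on-evals}, together with the key observation that a pure Gaussian weight on eigenvalues $|\lambda|>\Lambda$ can trade exponential decay in $\Lambda$ for a weaker (but still Gaussian) weight.

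First, I would write $a - P_\Lambda a P_\Lambda = (1 - P_\Lambda) a + P_\Lambda a (1 - P_\Lambda)$. Applying Cauchy--Schwarz to each summand (and using $\norm{P_\Lambda}\le 1$) reduces the estimate to a bound on
\[
  \norm{a}\left(\norm{(1-P_\Lambda)\plift{t}{v_x}}_H \norm{\plift{t}{w_x}}_H + \norm{\plift{t}{v_x}}_H \norm{(1-P_\Lambda)\plift{t}{w_x}}_H\right).
\]

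Second, I would use $\pliftmap{t} = \sum_\lambda e^{-t\lambda^2}\spectrallift{\lambda}$ (weakly), whose summands lie in mutually orthogonal eigenspaces of $D$, so that Parseval combined with Theorem~\ref{theorem:sup-bound-on-evals} yields
\[
  \norm{(1-P_\Lambda)\plift{t}{v_x}}^2 = \sum_{|\lambda|>\Lambda} e^{-2t\lambda^2}\norm{\spectrallift{\lambda}(v_x)}^2 \le c^2 \norm{v_x}^2 \sum_{|\lambda|>\Lambda} e^{-2t\lambda^2}(1+|\lambda|)^{2(m-1)},
\]
and the analogous bound without the truncation for $\norm{\plift{t}{v_x}}^2$. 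The crucial elementary inequality is that for any $z\in[0,1)$ and $|\lambda|>\Lambda$, $e^{-2t\lambda^2} \le e^{-2tz\Lambda^2} e^{-2t(1-z)\lambda^2}$, which lets one pull the desired $\Lambda$-dependent exponential outside the sum while leaving a sum that looks just like the untruncated one, but with $t$ rescaled by $(1-z)$.

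Third, the untruncated sum is controlled by Weyl's law for the counting function of $D$: converting to an integral against the spectral density of order $r^{m-1}\,dr$ gives $\sum_\lambda e^{-2s\lambda^2}(1+|\lambda|)^{2(m-1)} = \bigo{s^{-(3m-2)/2}}$ as $s\to 0$. With $s=t$ for the untruncated factor and $s=(1-z)t$ for the tail, this polynomial-in-$t^{-1}$ bound is absorbed into the (weaker) $t^{1-2m}$ claimed in the Lemma, since $(3m-2)/2 \le 2m-1$ for $m\ge 0$. Recombining the two Cauchy--Schwarz factors produces exactly one $e^{-tz\Lambda^2}$ (the square root of $e^{-2tz\Lambda^2}$) and a total power of $t$ no worse than $t^{1-2m}$, as required.

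The main (mild) obstacle is the bookkeeping in the final combination and the invocation of Weyl's law with uniformity in $x,v_x,w_x$: uniformity in $v_x$ (and in $x$) is immediate from the pointwise nature of the Hörmander estimate, while the constants in the Weyl asymptotics depend only on the underlying manifold and bundle. The slackness $z<1$ is exactly what permits one to keep the Gaussian suppression in $\Lambda$ while swallowing subexponential prefactors.
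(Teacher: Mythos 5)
Your argument is correct and is essentially the paper's own proof in slightly different packaging: the paper expands $\ip{\plift{t}{v_x}, (a - P_\Lambda a P_\Lambda)\plift{t}{w_x}}$ directly as the double sum of ``missing'' matrix elements over $(\lambda_1,\lambda_2)\notin[-\Lambda,\Lambda]^2$, bounds each via Theorem~\ref{theorem:sup-bound-on-evals}, and factors the result into one tail sum (carrying $e^{-(1-\epsilon)t\Lambda^2}$ via the same shift trick you use) times one full sum controlled by heat asymptotics — exactly the two factors your decomposition $a - P_\Lambda a P_\Lambda = (1-P_\Lambda)a + P_\Lambda a(1-P_\Lambda)$ plus Cauchy--Schwarz and Parseval produce. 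Your Weyl-law exponent $t^{-(3m-2)/2}$ differs from the paper's intermediate $t^{1/2-m}$ per factor, but as you note both are absorbed by the stated $t^{1-2m}$, so the conclusion stands.
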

\begin{proof}
    Recall that the integral transform associated to the kernel $\pk{t}{x}{y}$ equals the bounded linear operator $w \mapsto e^{-t D^2} w$ on $H$, so that $\pliftmap{t} = \sum_\lambda e^{-t \lambda^2} \spectrallift{\lambda}$ weakly.
    Thus,  $\langle P_\Lambda \plift{t}{v_x}, w \rangle = \sum_{|\lambda| < \Lambda \in \sigma(D)} e^{-t \lambda^2} \ip{\spectrallift{\lambda}(v_x), w }$.

  The difference to be estimated, then, consists of the sum of terms missing in $\left\langle P_\Lambda \plift{t}{v_x}, a P_\Lambda \plift{t}{w_x} \right\rangle$, which equals $\sum_{\lambda_1, \lambda_2 \not \in [-\Lambda,\Lambda]^2} e^{-t(\lambda_1^2 + \lambda_2^2)} \ip{\spectrallift{\lambda_1}(v_x), a \spectrallift{\lambda_2}(w_x)}$.

  First note that Theorem~\ref{theorem:sup-bound-on-evals} provides a global constant $c$ that bounds the factor $|\ip{\spectrallift{\lambda_1}(v_x), a \spectrallift{\lambda_2}(w_x)}|$ by $\norm{v_x} \norm{w_x} \norm{a} c^2 \left((1 + \lambda_1^2)(1 + \lambda_2^2)\right)^{m -1}$.

  Now, $\sum_{|\lambda| > \Lambda} e^{-t \lambda^2} (1 + \lambda^2)^{m-1}$ is, for $0 < \epsilon \leq 1$, bounded by $e^{-(1-\epsilon)t \Lambda^2}$ times the shifted sum $\sum_{|\lambda| > \Lambda} e^{-t \epsilon \lambda^2} (1 + \lambda^2)^{m-1}$.
  Moreover, the entire sum $\sum_{\lambda} e^{-t \lambda^2} ( 1 + \lambda^2)^{m-1}$ is, by the heat asymptotics for the Laplace-type operator $D^2$, bounded by $O(t^{\frac12 -m})$.

  Thus, we obtain a bound of $\bigo{c^2 \norm{v_x} \norm{w_x} \norm{a} t^{1-2m} e^{-t(1-\epsilon) \Lambda^2}}$.
\end{proof}

Our localization map is thus given by a truncated, rescaled heat flow, as follows.

\begin{definition}
  Let $t_\Lambda \eqdef  2m \Lambda^{-2} \log \Lambda$.
  The map $F_\Lambda \colon \spinorbundle \to P_\Lambda H$ is given by
  \[
    v_x \mapsto (2 \pi t_\Lambda)^{m/4} \sum_{|\lambda| \leq \Lambda} e^{-t_\Lambda \lambda^2} \spectrallift{\lambda}(v_x).
  \]
\end{definition}

There exists finite $\Lambda$ such that $F_\Lambda$ is injective, by Lemma~\ref{lemma:global-heat-kernel-lift-inner-product} and injectivity of the heat flow $\plift{t}{v_x}$.

Now we are in a position to connect $F_\Lambda$ to the localization question of Proposition~\ref{prop:pointsarestates}.

\begin{proposition} \label{proposition:spinor-states-localized-near-base-point}
  Consider the map $\statemap \colon \mathbb{P} \spinorbundle \to \statespace$ given by 
  \[
    \statemap([v_x]) \eqdef \left[ F_\Lambda(v_x) \right] \in \statespace,
  \]
  for $\Lambda$ sufficiently large that $F_\Lambda$ is injective.
  Then, $\statemap([v_x])$ is localized near $x$ in the sense that
  \begin{align*}
    \disp(\mu_{\statemap([v_x])}) = \tbigo{\Lambda^{-2 }}, && W_2(\mu_{\statemap([v_x])}, \delta_x)^2 = \tbigo{\Lambda^{-2 }}.
  \end{align*}
  \end{proposition}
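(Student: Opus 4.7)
The plan is to show that the probability measure $\mu = \mu_{\statemap([v_x])}$ on $M$ induced by $F_\Lambda(v_x)$ satisfies $E_\mu[f] = f(x) + \tbigo{\Lambda^{-2}}$ for every smooth scalar $f$ (with implicit constant controlled by the Lipschitz data of $f$), and then to deduce both the dispersion and the Wasserstein bounds directly from this moment estimate.

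First I would combine the heat-asymptotic expansion of Lemma~\ref{lemma:global-heat-kernel-lift-inner-product} with the truncation estimate of Lemma~\ref{lemma:heat-kernel-versus-truncation-global-inner-product}, evaluated at $t = t_\Lambda = 2m \Lambda^{-2} \log \Lambda$, to show that for any $f \in C^\infty(M)$ acting on sections by pointwise multiplication,
\[
  (2\pi t_\Lambda)^{m/2} \ip{P_\Lambda \plift{t_\Lambda}{v_x}, f P_\Lambda \plift{t_\Lambda}{v_x}} = f(x) \norm{v_x}^2 + \norm{v_x}^2 \tbigo{\bigl(\norm{f} + \lip{2}{f}{x}\bigr) \Lambda^{-2}}.
\]
Specialising to $f \equiv 1$ shows that the normalisation $\ip{F_\Lambda(v_x), F_\Lambda(v_x)} = \norm{v_x}^2 + \tbigo{\Lambda^{-2}}$ is bounded below for $v_x \neq 0$; dividing then yields the moment identity $E_\mu[f] = f(x) + \tbigo{\Lambda^{-2}}$.

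I would then apply this moment identity to the smooth scalar functions $f_{ij}(y) \eqdef (\embedding_i(y) - \embedding_i(x))(\embedding_j(y) - \embedding_j(x))$ built from the components of the embedding $\embedding$. Each $f_{ij}$ vanishes at $x$ and, by smoothness of $\embedding$ and compactness of $M$, satisfies $|f_{ij}(y)| \leq C \contdist(x,y)^2$ globally, so that $\lip{2}{f_{ij}}{x}$ is uniformly bounded in $x$. Summing the diagonal entries gives $E_\mu[\rndist{\embedding}{\embedding(x)}^2] = \tbigo{\Lambda^{-2}}$. Since $E_\mu[\embedding]$ minimises the expected squared deviation, $\disp(\mu) \leq E_\mu[\rndist{\embedding}{\embedding(x)}^2] = \tbigo{\Lambda^{-2}}$. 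The bi-Lipschitz bound $\contdist(x,y) \leq \beta \rndist{\embedding(x)}{\embedding(y)}$ used already in Lemma~\ref{lem:localized-near-barycenter} then yields $W_2(\mu, \delta_x)^2 = E_\mu[\contdist(x,\cdot)^2] \leq \beta^2 E_\mu[\rndist{\embedding}{\embedding(x)}^2] = \tbigo{\Lambda^{-2}}$.

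The main technical obstacle lies in the first step, namely in verifying that the scaling $t_\Lambda = 2m \Lambda^{-2} \log \Lambda$ really does balance the two competing errors. The heat-asymptotic error from Lemma~\ref{lemma:global-heat-kernel-lift-inner-product} at $k=2$ becomes, after rescaling by $(2\pi t_\Lambda)^{m/2}$, of order $t_\Lambda = \tbigo{\Lambda^{-2}}$, which is exactly the target. The truncation error from Lemma~\ref{lemma:heat-kernel-versus-truncation-global-inner-product}, after the same rescaling, is of the form $t_\Lambda^{1 - 3m/2} e^{-t_\Lambda z \Lambda^{2}}$: a polynomial prefactor of order $\Lambda^{3m-2}$ times an exponential factor $\Lambda^{-2mz}$. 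One must check that, for $z$ sufficiently close to $1$ (and, if necessary, after enlarging the constant in $t_\Lambda$ beyond $2m$), the exponential factor dominates the polynomial, so that the truncation contribution is also $\tbigo{\Lambda^{-2}}$ and the claimed uniform bounds follow.
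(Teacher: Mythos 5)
Your proposal is correct and follows essentially the same route as the paper: combine Lemma~\ref{lemma:global-heat-kernel-lift-inner-product} with Lemma~\ref{lemma:heat-kernel-versus-truncation-global-inner-product} at $t = t_\Lambda$ to obtain the moment estimate $E_\mu[f] = f(x) + \tbigo{\Lambda^{-2}}$, then evaluate it on test functions vanishing quadratically at $x$ (the paper uses $d(x,\cdot)^2$ and the coordinates $\embedding_i - \embedding_i(x)$ directly, where you use the products $f_{ij}$ plus bi-Lipschitz equivalence --- an immaterial difference). Your closing caveat about possibly enlarging the constant in $t_\Lambda$ is well taken: after the $(2\pi t_\Lambda)^{m/2}$ rescaling the truncation term is of order $\Lambda^{3m-2-2mz}$ up to logarithms, so with $z<1$ one indeed needs the constant in $t_\Lambda$ to exceed $3m$, a point the paper's own proof glosses over.
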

  \begin{proof}
    Note that for any $\epsilon > 0$ we may pick $z$ such that $t_\Lambda^{1 - 2m} e^{- t_\Lambda z \Lambda^2} = \tbigo{\Lambda^{-2 }}$.
    Thus, for $a \in \Gamma(\End \spinorbundle)$ we have
        \begin{align*}
          \ip{F_\Lambda(v_x), a F_\Lambda(v_x)} &=
                                                  (2 \pi t)^{m/2} \ip{\plift{t}{v_x}, a \plift{t}{v_x}} + \norm{v_x}^2 \norm{a} \tbigo{\Lambda^{-2 }} \\
          &= (v_x, a_x v_x) + \norm{v_x}^2 \tbigo{\lip{k}{a}{x} \Lambda^{-k } + \norm{a} \Lambda^{-2 }}
          \end{align*}
          so that in particular $\norm{F_\Lambda(v_x)}^2 = \norm{v_x}^2 (1 + \tbigo{\Lambda^{-2 }})$.
          
          With $f_x(y) \eqdef d(x, y)^2$ and $\norm{v_x}^2 = 1$, we have $W_2(\mu_{\statemap([v_x])}, \delta_x)^2 = \statemap([v_x])(f_x) = \tbigo{\Lambda^{-2 }}$, and with $g(y)_i \eqdef \embedding(y)_i - \embedding(x)_i$ we see that $\statemap([v_x])(g_i)$ is $\tbigo{\Lambda^{-1 }}$. The dispersion of the associated measure is therefore $\tbigo{\Lambda^{-2 }}$.
\end{proof}

\begin{proof}[Proof of Proposition~\ref{prop:pointsarestates}]
  Let $\mu_1, \mu_2$ be the measures associated to $\statemap(v_x)$ and $\statemap(w_y)$ respectively.
  Then, $\disp(\mu_i) = \tbigo{\Lambda^{-2 }}$ by Proposition~\ref{proposition:spinor-states-localized-near-base-point} and we have $\left| \contdist(x, y) - d(\mu_1, \mu_2) \right| \leq W_2(\delta_x, \mu_1) + W_2(\mu_2, \delta_y) = \tbigo{\Lambda^{-1 }}$ .

  Let $p \eqdef E_{\mu_{\statemap(v_x)}} \left[ \embedding \right]$.
  Then, $\rndist{\embedding(x)}{p} \leq W_2(\embedding(x), \embedding_* \mu_1) + \sqrt{\disp(\mu)}$ and the first term is, by bi-Lipschitz equivalence, $O(W_2(x, \mu_1))$ so that $\rndist{\embedding(x)}{p} = \tbigo{\Lambda^{-1 }}$.
  Thus, $d(x, b(\statemap(v_x))) = \tbigo{\Lambda^{-1 }}$ as well.

  Finally, for probability measures $\nu$, $x = b(\nu)$ and $0 \neq v \in \spinorbundle_x$, we have 
$d(\statemap(v), \nu) \leq W_2(\delta_x, \nu) + W_2(\statemap(v), \delta_x)$, and Lemma~\ref{lem:localized-near-barycenter} leads to bounds of $O(\sqrt{\disp(\nu)})$ and $\tbigo{\Lambda^{-1 }}$ (when combined with Proposition~\ref{proposition:spinor-states-localized-near-base-point}), respectively, on the latter.
\end{proof}

\subsection{The space $\statespace$ in terms of the truncated spectral triple}
\label{sec:metriconstates}
By Corollary~\ref{cor:g-h-approximation}, there exists $\epsilon = \tbigo{\Lambda^{-1} }$ such that the space $\estatespace{\epsilon^2}$ is $\epsilon$-close to $M$, when equipped with the Kantorovich-Rubinstein metric.
By Kantorovich-Rubinstein duality, that metric can be computed by Connes' formula \eqref{eq:distance}:
\begin{align*}
  \lambdist(\brkt{v}, \brkt{w})
  &=
    \sup_{f \in C^\infty(M)} \left\{ \left| \int_M f d \mu_v - \int_M f d \mu_w \right| \mid \norm{[D,f]} \leq 1\right\}
\end{align*}

Now, each element $\brkt{v}$ of $\statespace$ induces a state $\omega_v$ of $C^\infty(M)_\Lambda$, which corresponds to representatives $v \in H_\Lambda$ of $\brkt{v}$ as
\begin{align*}
  \omega_v(P_\Lambda f P_\Lambda) &= \langle v, P_\Lambda f P_\Lambda v \rangle / \norm{v}_H^2 = \langle v, f v \rangle / \norm{v}_{H}^2
  \\
  &= \int_M f(x) d \mu_v(x).
\end{align*}

With this identification, we have
\begin{align} \label{eq:pullback-distance}
  \lambdist(\brkt{v}, \brkt{w})
  &=
    \sup_{f \in C^\infty(M)} \left\{ \left| \omega_v(P_\Lambda f P_\Lambda) - \omega_w(P_\Lambda f P_\Lambda) \right| \mid \norm{[D,f]} \leq 1 \right\}.
\end{align}

It is an open question whether, in the limit $\Lambda \to \infty$, this metric can be approximated arbitrarily well in Gromov-Hausdorff distance by the functional on the \emph{truncated} spectral triple given by

\begin{align} \label{eq:truncated-distance}
  \trlambdist(\brkt{v}, \brkt{w})
  &=
    \sup_{f \in C_\Lambda^\infty(M)} \left\{ \left| \omega_v(f) - \omega_w(f) \right| \mid \norm{[D_\Lambda, f ]} \leq 1 \right\}.
\end{align}

Although we clearly have $\norm{[D_\Lambda, P_\Lambda f P_\Lambda]} \leq \norm{[D,f]}$ so that $\lambdist \leq \trlambdist$, it is a highly nontrivial undertaking to obtain a bound in the \emph{opposite} direction.
See e.g. \cite{d2014spectral} for further perspective on the problem.

The first definitive result in this direction is that of \cite{2020arXiv200508544V},  where it is shown that the appropriately chosen state spaces converge in Gromov-Hausdorff distance whenever there exists a \emph{$C^1$-contractive approximate order isomorphism}, that is, a map $S \colon C_\Lambda^\infty(M) \to C^\infty(M)$ such that $\norm{S(P_\Lambda f P_\Lambda) - f}$ and $\norm{P_\Lambda(S(f_\Lambda))P_\Lambda - f_\Lambda}$ are bounded by $o(1)$ multiples of $\norm{[D,f]}$ and $\norm{[D_\Lambda, f_\Lambda]}$ respectively, and that moreover satisfies $\norm{[D,S(f_\lambda)]} \leq \norm{[D_\Lambda, f_\Lambda]}$, $\norm{S(f_\Lambda)} \leq \norm{f_\Lambda}$.

On the experimental side, the comparison between these distances for certain localized states on the sphere in Section~\ref{sec:sphere_graph_error_analysis} presents corroborating evidence to the idea that the Gromow-Hausdorff limit of $\statespace$ should not depend on whether we equip it with $\lambdist$ or with $\trlambdist$.

\begin{remark}
  In the present context we \emph{do} have a natural map from $B(P_\Lambda H)$ to $\Gamma^\infty(\End \spinorbundle)$ defined by the bilinear form $(v_x, S(a_\Lambda) w_x) \eqdef \ip{F_\Lambda(v_x), a_\Lambda F_\Lambda(w_x)}$ for $a_\Lambda \in B(P_\Lambda H)$.
  If we then take the normalized fiberwise trace of the corresponding endomorphism, we obtain a smooth function on $M$.
  The norm of this map is asymptotically equivalent to $1$, but the estimates of the previous section just show that\footnote{Note that the $\norm{a}$ coefficient in the error bound is due to the nonleading terms in the heat expansion, so that it vanishes for flat manifolds.}, for $a \in \Gamma(\End \spinorbundle)$, $\norm{a - S(P_\lambda a P_\Lambda)} = \tbigo{\norm{a} \Lambda^{-2 } + \sup_x \lip{1}{a}{x} \Lambda^{-1 }}$ (so that, \emph{a fortiori}, we have a similar bound on $\norm{a_\Lambda - S(a_\Lambda)}$), and more importantly it is not clear that we can control $\norm{[D, S(f_\Lambda)]}$ by $\norm{[D_\Lambda, f_\Lambda]}$ due to the presence of a boundary term.
\end{remark}

It is still very well possible that the theorem holds in general, but attacking this difficult problem is beyond the scope of the present paper.
We do conjecture that it is possible to define a metric on the spaces $\estatespace{\epsilon_\Lambda^2}$ that converges to $M$ in Gromov-Hausdorff limit as $\Lambda \to \infty$, perhaps by using a dual scale where $(C_{\Lambda'}^\infty(M), D_{\Lambda'})$ define a metric on $\statespace$ for some $\Lambda' >> \Lambda$.

For the purposes of the following sections, we will regard $\trlambdist$ as the natural metric on $\statespace$ and leave open the question of Gromov-Hausdorff convergence.

\begin{remark}
Our approach of using $\embedding$ to quantify state localization has the advantage of computational feasibility, at the expense of requiring explicit knowledge of the $2n$ elements $P_\Lambda \embedding_i P_\Lambda$ and $P_\Lambda \embedding_i^2 P_\Lambda$ of $C^\infty(M)_\Lambda$.
The computationally more challenging but perhaps conceptually more satisfying approach of defining the dispersion via $\sup_{a \in C^\infty(M)_\Lambda} \left\{ \left| \omega_v(a^* a) - |\omega_v(a)|^2 \right| \mid \norm{[D_\Lambda, a]} \leq 1 \right\}$, however, has an additional drawback: it requires knowledge of the pairs $(P_\Lambda a P_\Lambda, P_\Lambda a^*a P_\Lambda)$ for all $a \in C^\infty(M)_\Lambda$.
\end{remark}

\section{The \algo algorithm: associating a finite metric space}
\label{sec:algorithm}

Once a set of localized vector states is found, the Connes (Kantorovich-Rubinstein) distance between them will serve as an estimate for the geodesic distance between the points in $M$ near which they are concentrated.
Keeping in mind the discussion of Section~\ref{sec:metriconstates}, we will regard the truncated metric of Equation~(\ref{eq:truncated-distance}) as the natural metric on $\statespace$.

Localized vector states can be found by minimizing the dispersion functional in $H$.
Apart from the comparison of the metrics~(\ref{eq:pullback-distance}) and (\ref{eq:truncated-distance}), nonzero dispersion induces a distortion of estimated distances (see section \ref{sec:resolution}, below).
Therefore, the dispersion supplies a lower bound on the Gromov-Hausdorff distance between any graph of localized states and the manifold $M$.
Computationally speaking, then, it would be desirable to minimize the number of states (and, hence, computational resources) required to approach this bound.

The main other factor, besides correctness of distances, that influences the Gromov-Hausdorff distance is the density (in the Hausdorff sense) of our set of points inside $M$.
Optimally, therefore, the states would be equidistributed on $M$.

In order to construct a potential whose minima are both localized and roughly (that is, under the map $\embedding$) equidistributed, we add an electrostatic repulsion term to the dispersion.
Given a set $V$ of states, the next state is then generated as the minimum of the energy functional
\begin{align}
\label{eq:energyfun}
  e(v; V) \eqdef -\disp(v)^{-1} + g_e \sum_{w \in V} \left(\sum_i \left(\langle v, \embedding_i v \rangle - \langle w, \embedding_i w \rangle \right)^2\right)^{-1}.
\end{align}
The value of the coupling constant $g_e$ should ideally be sufficiently large to overcome local variation in minimal dispersion but is otherwise not expected to influence the generated states much -- this is consistent with our observations for $M = S^1, S^2$.

\subsection{The \algo algorithm}

Using the functional \eqref{eq:energyfun} we propose the following algorithm to construct states and thus a finite metric space $M_\Lambda$ that models the metric information about $M$ contained at cutoff $\Lambda$.

As preparation, we must estimate the number $N$ of states to generate.
\begin{itemize}
\item
  Estimate $\vol M_\Lambda$ and $\dim M_\Lambda$, e.g. using the asymptotic formulas of \cite{Stern_2018}.
\item
  Estimate the Euclidean dispersion $\disp_0 = E_\nu \left[ \| X \|^{2} \right]$ under the multivariate normal distribution $\nu$ of covariance matrix $2 \Lambda^{-2} \log \Lambda \id$ on $\R^{\dim M_\Lambda}$.
\item
  Set $N = \vol M_\Lambda / (\vol(B_{\dim M_\Lambda}) \disp_0^{\dim M_\Lambda / 2} )$, where $B_{\dim M_\Lambda}$ is the Euclidean unit ball of dimension $\dim M_\Lambda$.
\end{itemize}
For cases where $\embedding$ is a Riemannian embedding of $M$, any $g_e$ will suffice to lead to equidistributed states in $M$, while for $g_e=0$ the states generated numerically would mostly lie very close together.
However in the cases where $\embedding$ is far from Riemannian, we need to chose $g_e$ to be sufficiently large to overcome local variations in minimal dispersion, and assume this to mean that $-\alpha^2 \disp_0^{-1} + g_e \alpha^2 \disp_0^{-1} \geq -\beta^2 \disp_0^{-1}$, where $\alpha$ and $\beta$ are the optimal local Lipschitz constants of $\embedding$ and $\embedding^{-1}$, respectively.
This ensures that states in regions of $M$ where the dispersion is over-reported (due to stretching by $\embedding$) will be generated once the regions where the dispersion is under-reported are saturated with states, instead of being skipped.

Then, simply generate $N$ states by minimizing the iterative energy functional and calculate the Connes distance between them:
\begin{algorithmic}[1]
\While{$|V| \leq N$}
   \State Find a vector $w$ (locally) minimizing $e(w; V)$. \label{algstep:states}
   \State Append $w$ to $V$.
   \For{$v \in V$,}
   \State Set $d(v, w) = \max \{ |\langle v, a v \rangle - \langle w, a w \rangle| \colon \norm{[D,a]} \leq 1 \}$. \label{algstep:connes}
   \EndFor
\EndWhile
\end{algorithmic}

The algorithm, including the distance calculation and the examples $S^1$ and $S^2$, has been implemented in Python and is publicly available at \cite{graphmaker_repo}.

\subsection{Implementation: calculating the metric on $\statespace$}
When $v, w \in H_\Lambda$, the distance between the vector states $\bra{v} \cdot \ket{v}$ and $\bra{w} \cdot \ket{w}$ of the algebra $A = C^\infty(M)$ equals
\[
  \max _{a \in A_\Lambda} \{ |\langle v, a v \rangle - \langle w, a w \rangle| \colon \norm{[D_\Lambda,a]} \leq 1 \},
\]
as in discussed in section \ref{sec:metriconstates}.

The functional $a \mapsto \bra{v} a \ket{v} - \bra{w} a \ket{w}$ is linear and the space $\{ a  \in A_\Lambda \mid \norm{[D_\Lambda, a]} \leq 1 \}$ is convex, which ensures that computing the minimum is computationally feasible.

Indeed, if $a_0, \dotsc, a_n$ is a basis for $(A_\Lambda)_{\text{sa}}$, we can reformulate the problem as:
\begin{problem}
  Minimize $\sum_i c_i \left( \langle v, a_i v \rangle - \langle w, a_i w \rangle \right)$ over $c \in \R^{n+1}$, subject to the constraint
  \begin{align*}
    \begin{bmatrix}
      I & \sum_i c_i [D_\Lambda,a_i] \\
      \sum_i c_i [D_\Lambda,a_i]^* & I
    \end{bmatrix}
                                     > 0
  \end{align*}
\end{problem}
With the constraints formulated as a linear matrix inequality, we have put the problem in a form directly amenable to techniques from semi-definite programming.
A reasonably effective algorithm, given the scale of the problem, is then provided by the Splitting Cone Solver of \cite{o2016conic}.
\subsection{Complexity and the dimension of $C^\infty(M)_\Lambda$}
\label{sec:dim-of-pap}
Step \ref{algstep:states} of the \algo algorithm amounts to finding a local minimum of a quadratic function under quadratic constraints in a vector space of dimension $\dim H_\Lambda$, which can be done in $O(\dim H_\Lambda)^2$, e.g. with the BFGS algorithm.

The problem in step \ref{algstep:connes} is convex, of dimension $\dim C_\Lambda(M)$.
This factor is what limits the computational feasibility of high $\Lambda$ in our experiments, so it would be informative to analyze the scaling of $\dim C_\Lambda(M)$ with $\Lambda$.

As a simple example, one can represent the generator $e^{i \theta}$ of $C^\infty(S^1)$ as the shift operator on $H = l^2$, with basis indexed by $\Z$, where the corresponding Dirac operator acts diagonally as $D e_n = n e_n$. It is then easy to see that the dimension of $C^\infty(S^1)_\Lambda$ is equal to $\dim H_\Lambda = 2 \lfloor \Lambda \rfloor + 1$.

For $M = S^2$, if we choose an orthonormal basis $e_{lm}$ of eigenvectors of $D$ and introduce the spherical harmonics $_{0}Y_{lm}$ then we can express $\langle (e_{l_1 m_1} \cdot e_{l_2,m_2}),_{0} Y_{l_3 m_3} \rangle_{L^2(M)}$ in terms of $3j$-symbols. In particular, these vanish unless $||l_1|-|l_2|| \leq l_3 \leq |l_1| + |l_2|$, which tells us that $C^\infty(S^2)_\Lambda$ is spanned by $({}_0Y_{lm})_\Lambda$ for $l \leq 2 \Lambda$ and is thus of dimension bounded by $(2 \Lambda + 1)^2$.

The general situation is not entirely clear. However, our Lemma~\ref{prop:pointsarestates}, as noted there, provides a lower bound of $\Theta(\Lambda^{\dim M})$ on the scaling of $\dim C^\infty(M)_\Lambda$ with $\Lambda$.

\section{Example: $S^2$}
\label{sec:example}

The simplest interesting example of a commutative spectral triple that allows for an isometric embedding in $\R^3$ is probably the sphere $S^2$.
This section will cover the application of the \algo algorithm to truncations of $(C^\infty(S^2), L^2(S^2, \spinorbundle_{S^2}), D_{S^2})$, and thereby illustrate (and test the optimality of) the analytic results of Section~\ref{sec:localized}.

\subsection{Implementation}
The main ingredients are the vector space $C^\infty_{\Lambda}(S^2)$, the spectrum of $D_{S^2}$, the element $\embedding$, and their representation on $L^2(S^2, \spinorbundle_{S^2})_\Lambda$.

The vector space $C^\infty_{\Lambda}(S^2)$ is spanned by the spherical harmonic functions $Y_{lm}$ up to $l = 2 \Lambda$, as in section \ref{sec:dim-of-pap}.
An eigenbasis $e_{lm}$ of $D$ can be expressed in terms of the spin-weighted spherical harmonics $_{s}Y_{lm}$, with $s = \pm \frac12$, as discussed e.g. in \cite{gracia2013elements}, section 9.A.
The matrix coefficients of the representation of $C^\infty_{\Lambda}(S^2)$ can then be expressed in terms of triple integrals of spin-weighted spherical harmonics.
Note that a brute-force approach of calculating the inner products $\langle e_{lm}, Y_{l'm'} e_{l''m''} \rangle$ in order to obviate knowledge of the representation-theoretic machinery attached to $S^2$ would have been possible, however it would have introduced the additional complexity of calculating $(\dim H_{\Lambda})^2 \cdot \dim C^\infty_{\Lambda}(S^2) \cdot \rank \spinorbundle$ integrals numerically.

The element $\embedding$ is just the idempotent associated to the Bott projection: $\embedding = \begin{pmatrix} z && x - i y \\ x + i y && - z \end{pmatrix}$, where $x, y, z$ are the standard coordinates on the embedding $S^2 \hookrightarrow \R^3$.
Note that this embedding $\embedding$ is isometric, although that is \emph{not} necessary for the algorithm or the theory in Section~\ref{sec:localized} to work.

The source code to this implementation is publicly available as part of the full Python implementation of the algorithm at \cite{graphmaker_repo}.

\subsection{The localized state densities}

Because the measures associated to states in $\statespace$ are of the form $(v, v) \vol_M$, with $v$ in the finite-dimensional vector space $P_\Lambda H$, one can easily plot the corresponding function $(v, v)$ on $M$.
This allows us to test them, by simply plotting the corresponding fiberwise inner product of the spinor spherical harmonics in the continuum.
We can then compare these with the numerical states generated through the \algo algorithm for different  $\Lambda$.
The expectation is that the numerical states will be comparable to the states obtained through $\statemap$ but will be slightly less localized.

Figure~\ref{fig:LocStates} shows plots for $\statemap(v_x)$, for fixed $v_x \in \spinorbundle$ is fixed, and plots of numerical states for $\Lambda=4, 10$.
It is evident that the states are indeed peaked neatly near $x$, in both cases.
We thus find that the states are well localized and become more localized the larger the cutoff is.

\begin{figure}
\subfloat[][Analytic state for $\Lambda=4$]{\includegraphics[width=0.5\textwidth]{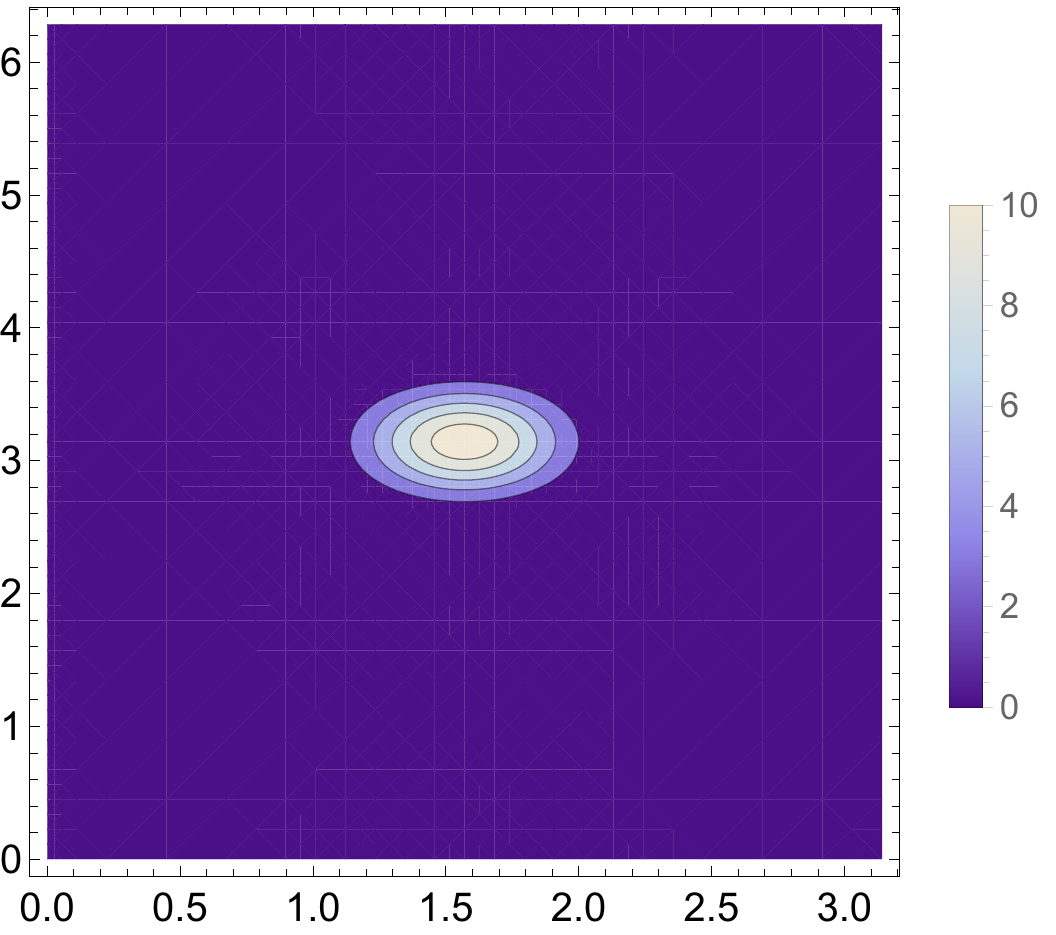}}
\subfloat[][Analytic state for $\Lambda=10$]{\includegraphics[width=0.5\textwidth]{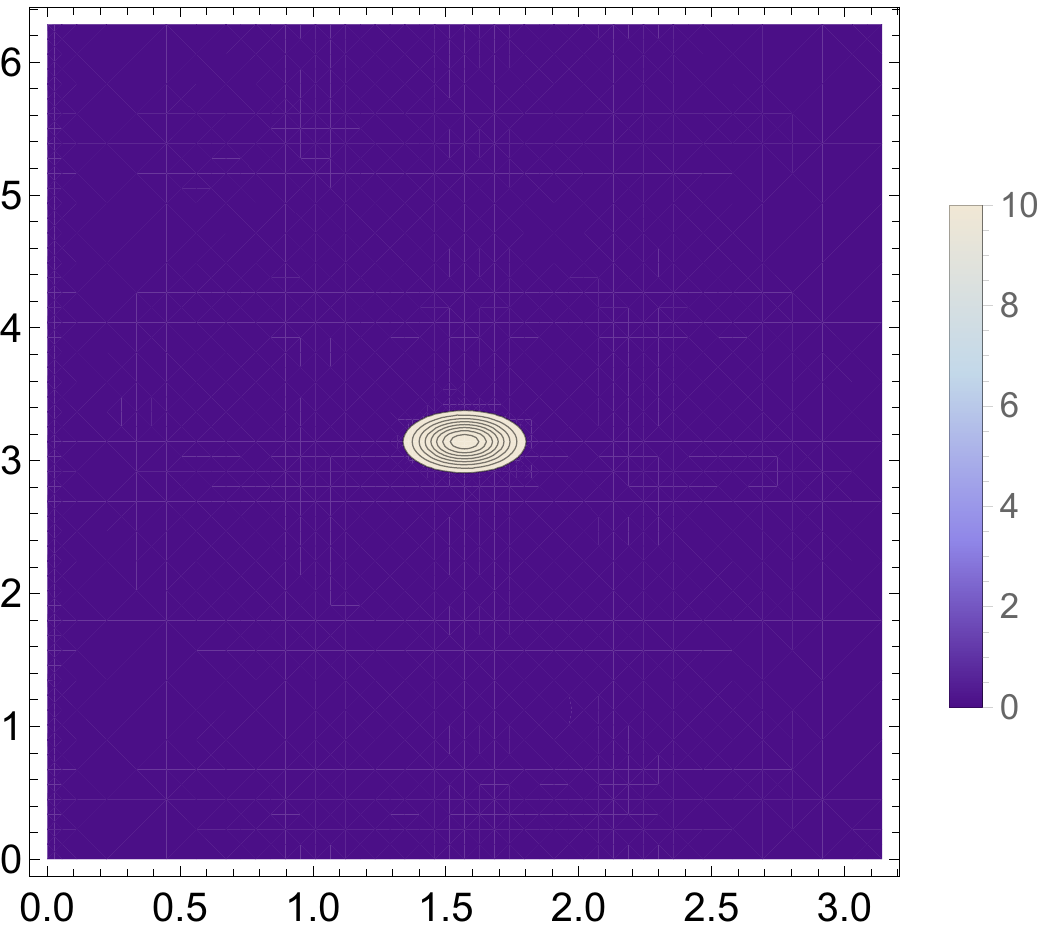}}

\subfloat[][Numerical state for $\Lambda=4$]{\includegraphics[width=0.5\textwidth]{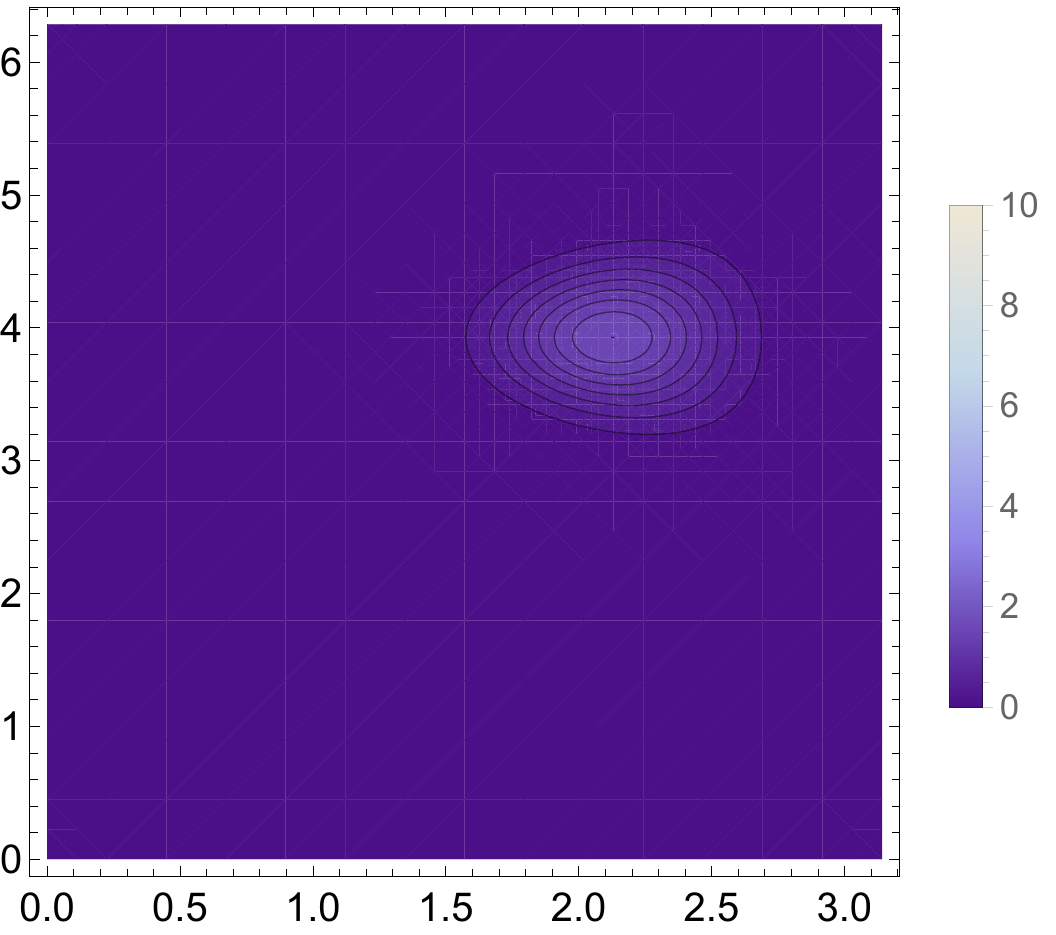}}
\subfloat[][Numerical state for $\Lambda=10$]{\includegraphics[width=0.5\textwidth]{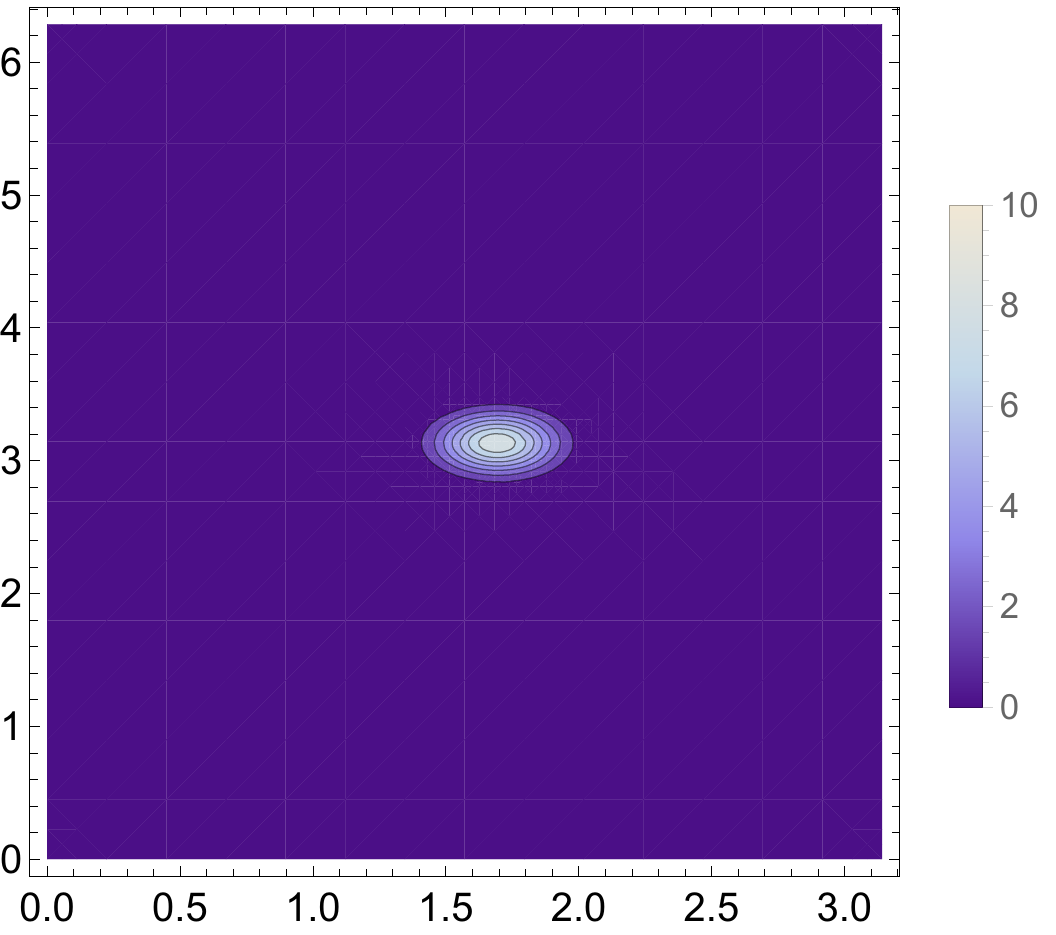}}
  \caption{Plot of analytic and truncated localized states.}
  \label{fig:LocStates}
\end{figure}

Other than this qualitative comparison we also have analytic control.
Lemma~\ref{prop:statesarepoints} gives the functional form of the dispersion as a function of the cut-off $\Lambda$ as $\log{\Lambda}/\Lambda^2$.
We can check this relation explicitly by plotting the size of the dispersion against the cutoff value, as done in Figure~\ref{fig:dispersion} for the cutoff up to $\Lambda=16$.

\begin{figure}
  \centering
\includegraphics[width=0.8\textwidth]{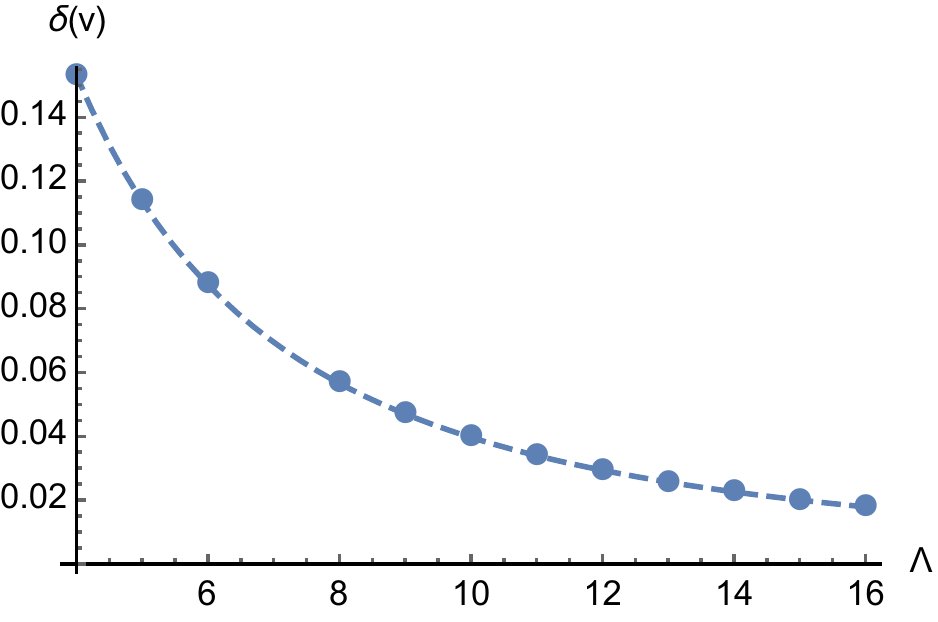}
\caption{Plot of the dispersion of states versus the value of the cutoff for the states. The dashed line is a fit of the analytic result that the dispersion should scale like $\log{\Lambda}/\Lambda^2$.}
\label{fig:dispersion}
\end{figure}

\subsection{Distribution of states over the sphere}

Plotting several states simultaneously allows us to show how the repulsion term distributes them over the sphere. Figure~\ref{fig:states_on_sphere} shows $17$ states for $\Lambda=11$.
The distribution of states in Figure~\ref{fig:states_on_sphere} has some inhomogeneities, some gaps between states are very large.
This is because we only generated 17 states instead of the $110$ we would expect to generate in the \algo algorithm.
Restricting the number of states reduced computation time, and allowed for a clearer visualization of the independent states.
In the right hand Figure we see the states as densities on the sphere, while the left hand plot shows the densities in the $\theta, \phi$ plane.

\begin{figure}
\subfloat[][States in $\theta - \phi$ plane]{\includegraphics[width=0.5\textwidth]{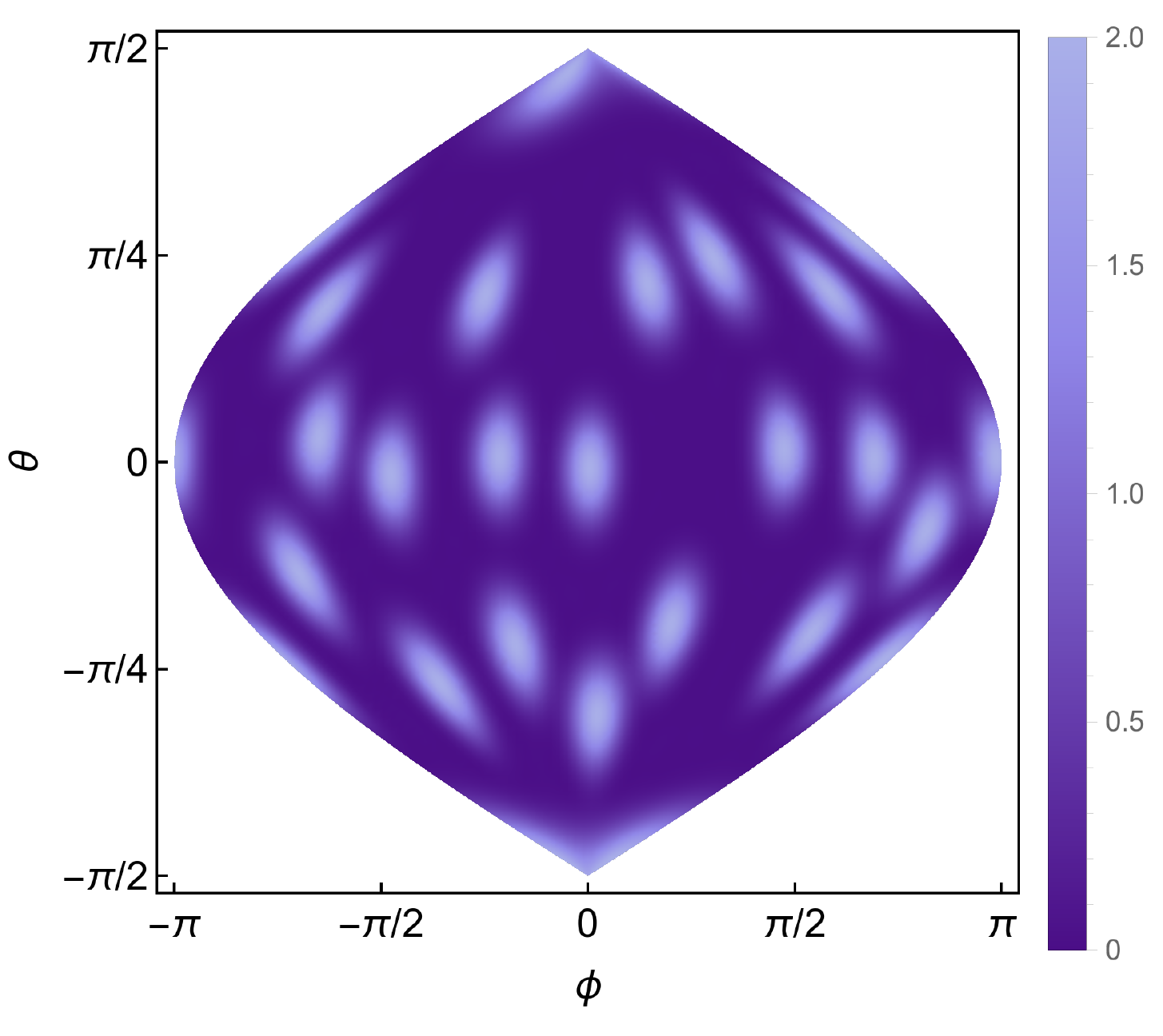}}
\subfloat[][States on the sphere]{\includegraphics[width=0.5\textwidth]{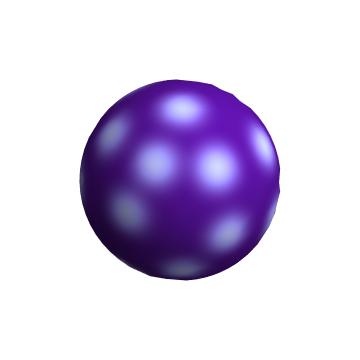}}
\caption{Localized states on the sphere, the left hand image shows the states projected on the two dimensional plane using a sinusoidal projection, while the right hand image shows the states on the sphere.}
  \label{fig:states_on_sphere}
\end{figure}

To test how the repulsive potential acts we can generate states on the sphere and just plot the coordinates for their center of mass associated with the embedding maps $\embedding_i$.
We show this in Figure~\ref{fig:repulsion} for a maximal eigenvalue of $\Lambda=10$, it is clear that without potential all states generated cluster at one point, while even a weak repulsive potential leads to points that are evenly distributed.

\begin{figure}
  \subfloat[][No repulsive potential]{\includegraphics[width=0.5\textwidth]{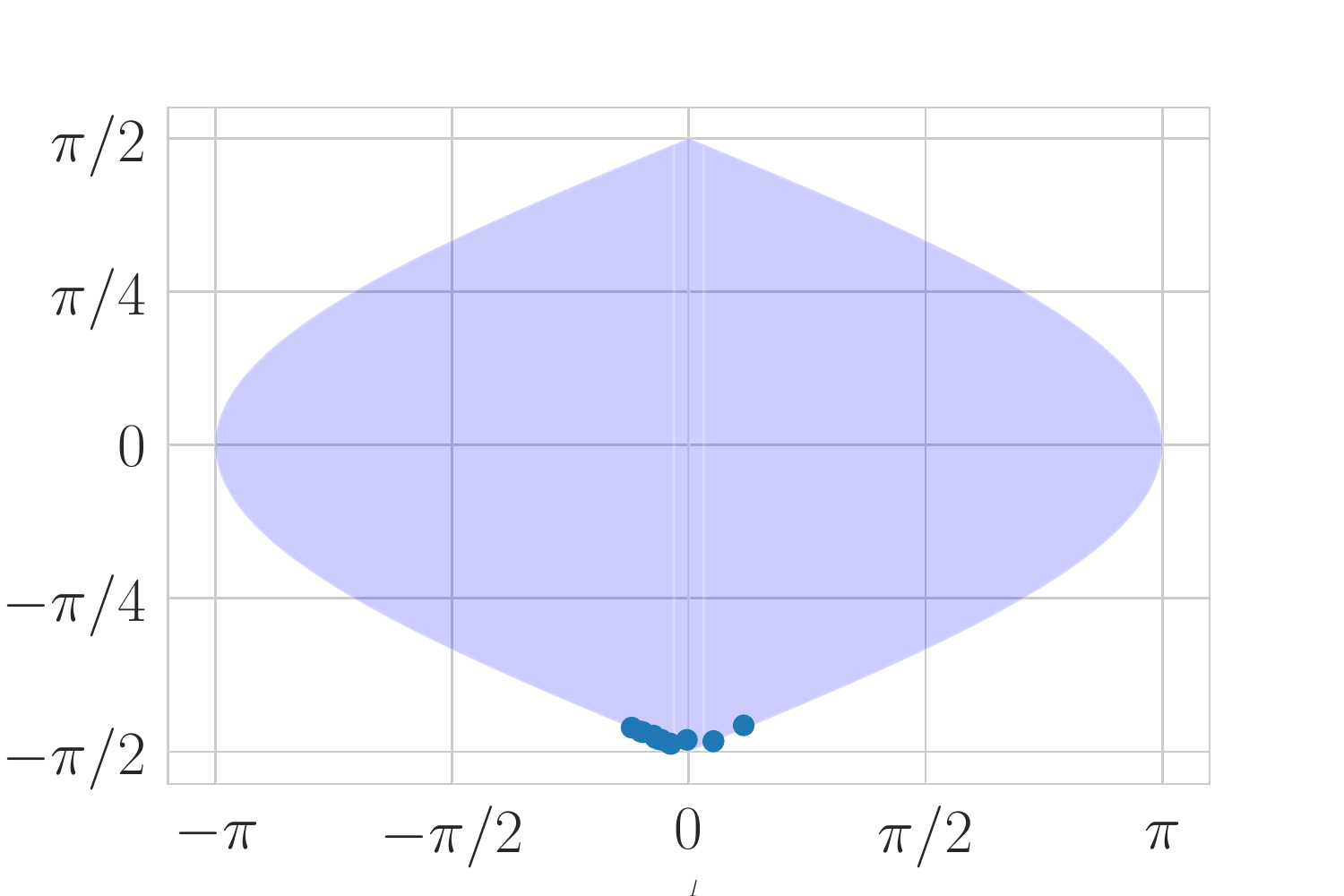}}
  \subfloat[][Repulsive potential with coupling $0.001$]{\includegraphics[width=0.5\textwidth]{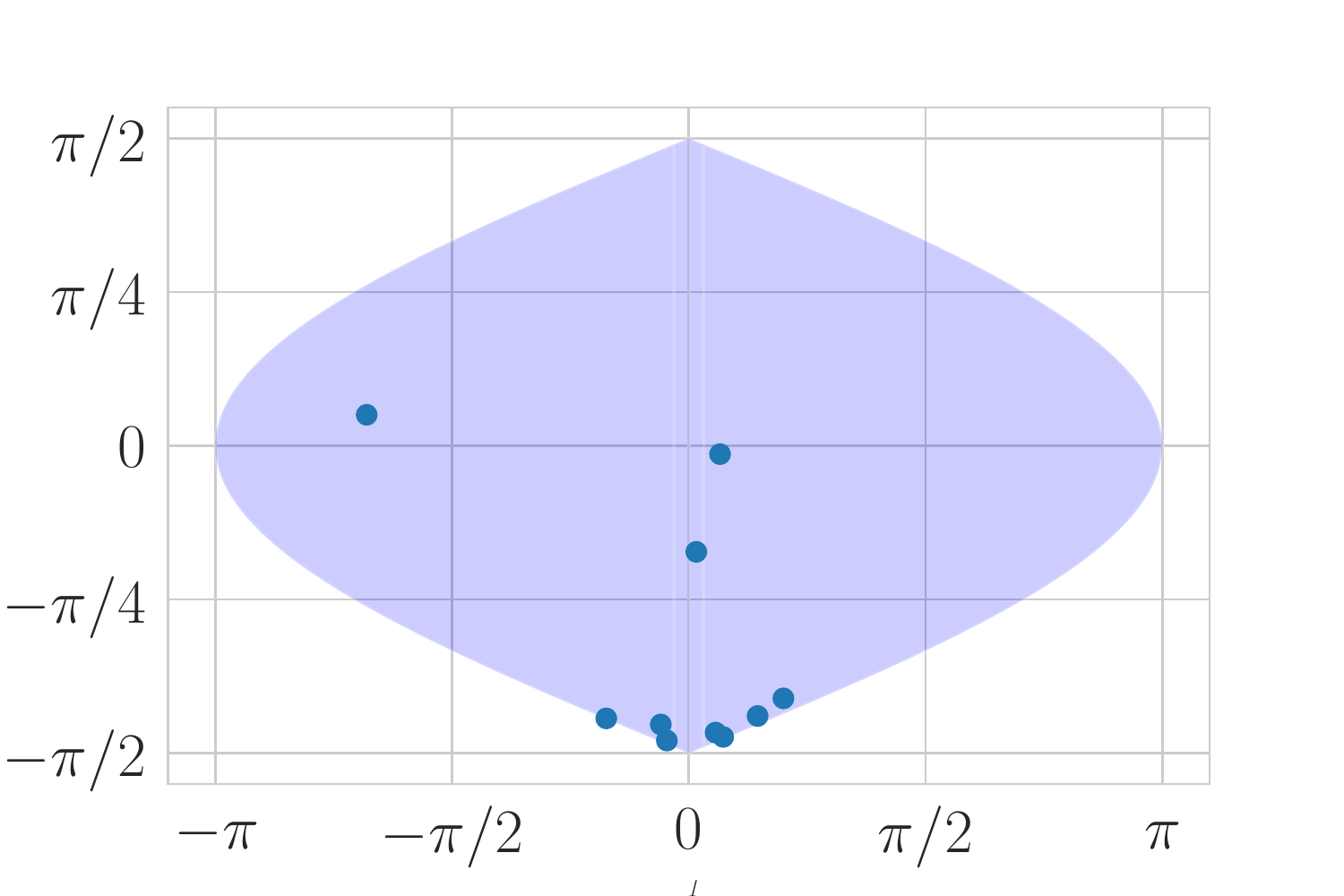}}

  \subfloat[][Repulsive potential with coupling $0.1$]{\includegraphics[width=0.5\textwidth]{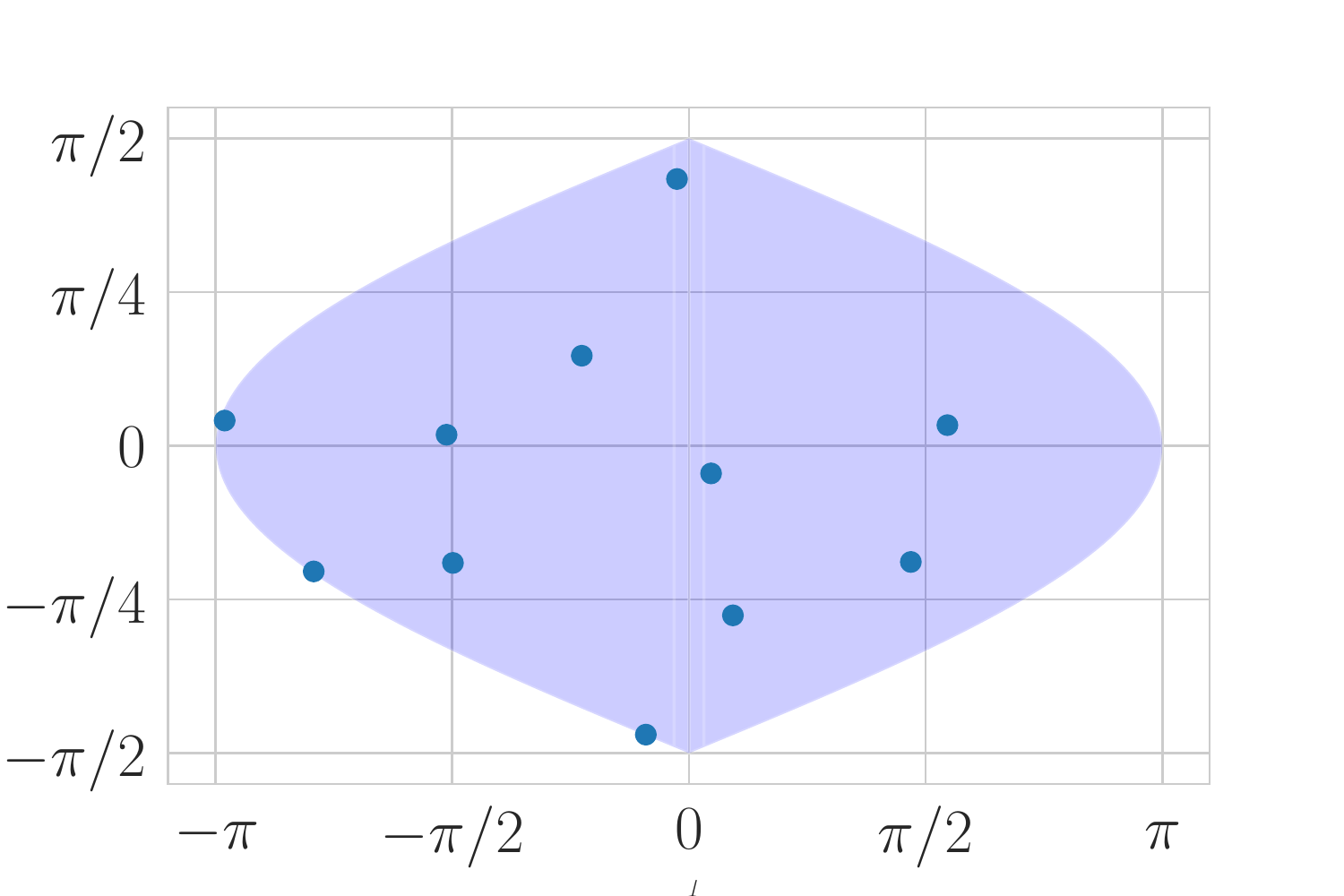}}
  \subfloat[][Repulsive potential with coupling $100$]{\includegraphics[width=0.5\textwidth]{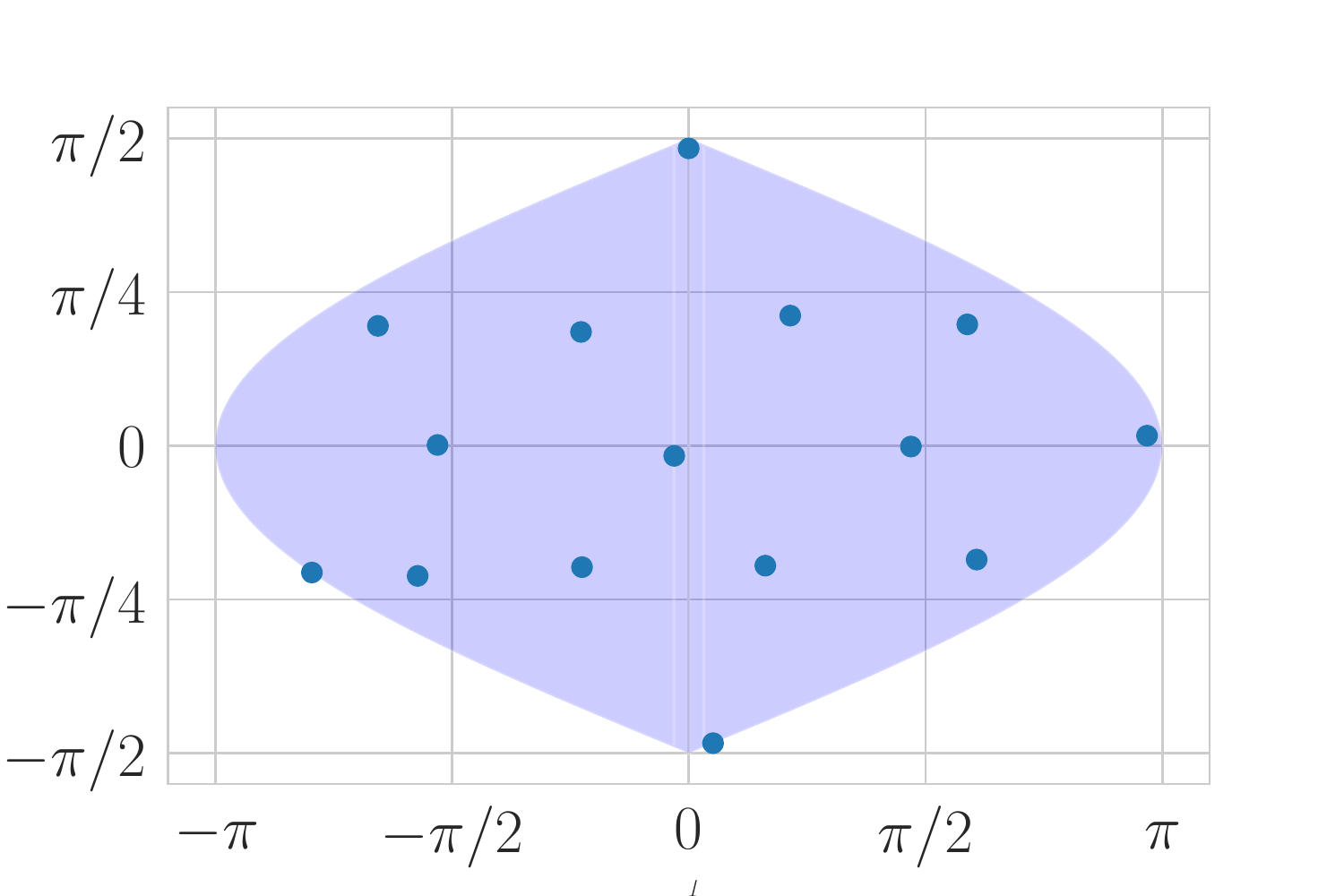}}
  \caption{This shows how the states are distributed dependent on the repulsive potential.
  We can see that even a weak repulsive potential suffices to lead to well distributed points.
  This figure shows the point distribution in the Sinusoidal projection, flattening the sphere onto the plane.}
  \label{fig:repulsion}
\end{figure}

\subsection{Error analysis}
\label{sec:sphere_graph_error_analysis}

If a measure $\mu$ on $S^2$ is reasonably localized, so that $x \eqdef E_{\mu}\left[ \embedding(X) \right] \neq 0$, then it possesses a unique $\embedding$-barycenter $p$ given by the projection of $x$ onto the sphere.
The Euclidean distance between $x$ and $\embedding(p)$ is then given by $\sqrt{1 - \norm{x}^2}$.

Figure~\ref{fig:truncated-distance-errors}(a) shows how closely the geodesic distance between barycenters is approximated by the truncated Connes distance $\trlambdist$.
The monotone scaling of the error reflects the fact that antipodal, imperfectly localized measures are significantly closer in Wasserstein distance than their barycenters are, due to the presence of the cut locus. 

Interestingly, the error is strictly positive, so that $\trlambdist(\mu_1, \mu_2)$ turns out to be -- for the states considered -- a \emph{better} approximation to $\contdist(p_1, p_2)$ than $W_1(\mu_1, \mu_2) = \lambdist(\mu_1, \mu_2)$ itself.
In particular, as long as the error is positive, the convergence of $\lambdist(\mu_1 , \mu_2)$ to $\contdist(p_1, p_2)$ as the dispersions fall implies convergence of $\trlambdist$ to $\lambdist$ as well.
Whether this points to special behaviour of the (truncated) Connes distance between \emph{localized} elements of $\statespace$ remains to be seen.

For measures $\mu_1, \mu_2$ on $S^2$, the analysis of Section~\ref{sec:resolution} shows that $|\contdist(p_1, p_2) - W_1(\mu_1, \mu_2)| \leq W_2(p_1, \mu_1) + W_2(p_2, \mu_2)$, and moreover the latter satisfies $W_2(p, \mu)^2 \leq \beta^2 E_{\mu} \left[ \norm{ \embedding(p) - \embedding(X) }^2 \right]$, where $\beta = \pi/2$ is the Lipschitz constant of $\embedding$.
Therefore, we have an estimate $|\contdist(p_1, p_2) - W_1(\mu_1, \mu_2)|^2 \leq  \pi^2 / 4 \left(\sum_i \disp(\mu_i) + (1 - \norm{x_i}^2) \right)$.
Since the other terms in this inequality can readily be calculated, it provides us with a theoretical lower (upper) bound on $W_1(\mu_1, \mu_2)$.
Figure~\ref{fig:truncated-distance-errors}(b) shows this lower bound, and the upper bound provided by $\trlambdist$, for $W_1(\mu_1, \mu_2)$, with $\contdist(p_1, p_2)$ shown for reference.

\begin{figure} \label{fig:truncated-distance-errors}
  \subfloat[][Absolute error]{ \includegraphics[width=0.5\textwidth]{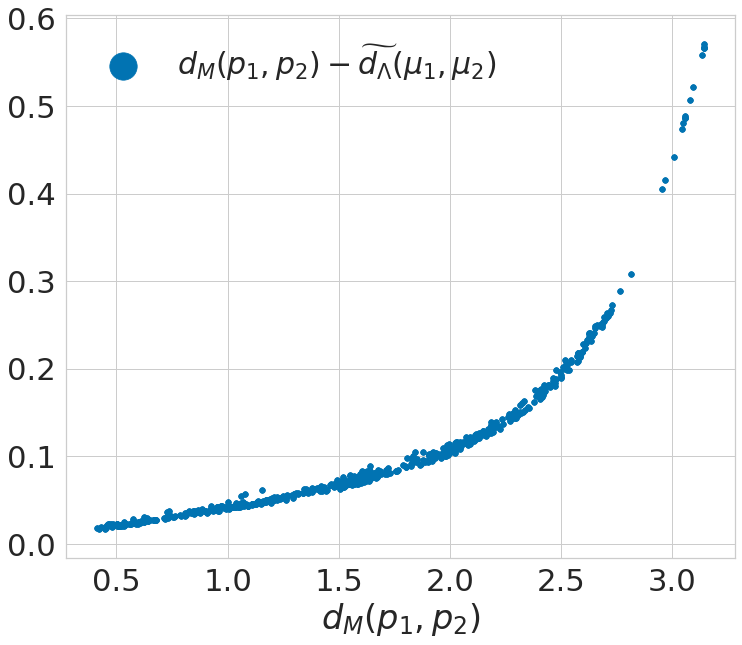}
  }
  \subfloat[][Bounds for $\lambdist(\mu_1, \mu_2)$]{
    \includegraphics[width=0.5\textwidth]{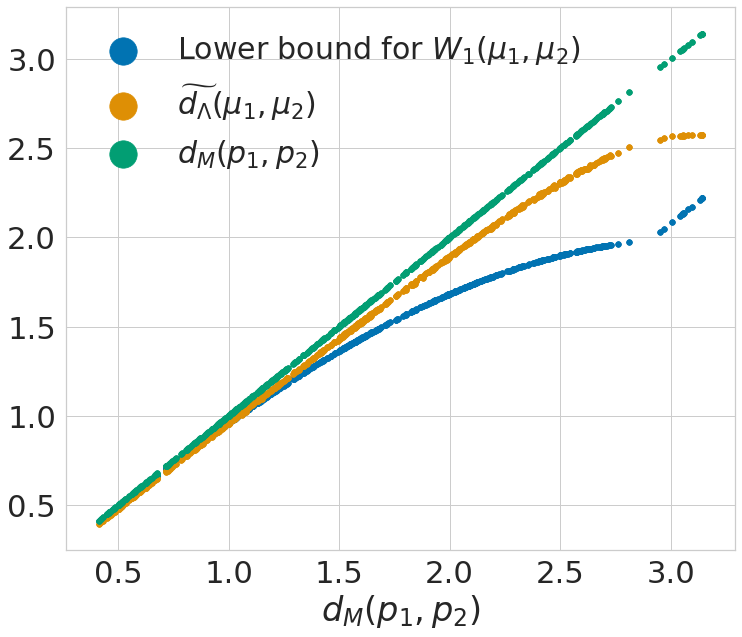}
    }
    \caption{Distance errors and bounds for pairs of localized states at $\Lambda = 5$.}
\end{figure}

\section{Embedding a distance graph in $\R^n$}
\label{sec:embedding_in_Rn}
Let $M$ be a Riemannian manifold, and assume that $M$ embeds isometrically into $\R^n$.
Now take a finite set $V$ of points in $M$ and their geodesic distances $d(\cdot, \cdot)|_{V \times V}$, e.g. by generating states as above and calculating their distances using~\eqref{eq:distance}.
Optimally, we would ask for a way to embed $V$ in $\R^n$ such that its image under this embedding equals its image under some Riemannian isometry $M \to \R^n$.

Of course, without knowledge of $M$ such a problem is unsolvable for any given $V$.
Instead, we hope for our embedding procedure to satisfy such a property asymptotically, i.e. that for sequences of $V$ of increasing density their embeddings converge to an embedding of $M$, under some suitable notions of density and convergence.
This is an open problem, and since our primary purpose at this point is one of visualization, we will only take it as a guiding principle.

\subsection{Stress and local isometry of embeddings}
The field of optimal graph embedding is well-established and provides many approaches to questions similar to the above.
A particular model of interest is \emph{metric multidimensional scaling}\footnote{See e.g. \cite{borg_modern_2005}}, where one looks for an embedding $X: V \to \R^n$ that minimizes the stress function,
\[
\sigma(X) = \frac{\sum_{p \neq q \in V} w(p,q) (d(p, q) - ||X(p) - X(q)||)^2}{\sum_{p \neq q \in V} w(p, q)},
\]
where $w$ is a positive weight function: this is just a weighted version of the second Gromov-Wasserstein distance between $V \subset M$ and $X(V) \subset \R^n$.

Because our $M$ is not assumed to be Euclidean, the usual choice $w = 1$ would be quite unnatural here.
In particular, an isometric embedding of $M$ in the Riemannian sense would not necessarily have minimal stress, because the model instead asks for isometry in the sense of maps of \emph{metric spaces}, not Riemannian manifolds.
Since all tangent space information is lost when discretizing like this, the Riemannian notion of isometry does not translate immediately and we must replace it using a measure of locality.

By the smoothness of an isometric embedding $\phi$ of $M$, the relative defect  $|d(p, q) - ||\phi(p) - \phi(q)||| / d(p, q)$ must converge to $0$ as $p \to q$.
That is to say, as long as we only worry about pairs of points that are \emph{close} in $M$, the stress function above places the correct restriction on $X$ - the further they are apart, the less sense the corresponding contribution to $\sigma$ makes.
This motivates us to pick a positive weight function $w(p, q)$ of $d(p, q)$ that decays monotonically and sufficiently quickly to suppress those lengths that cannot be approximated well by an Euclidean embedding.

For example, imagine two points connected by a shortest geodesic (a great circle arc) of length $l \leq \pi$ on the unit sphere and let that sphere be embedded isometrically in $\R^3$.
In $\R^3$, the shortest geodesic connecting the points is a chord of length $c(l) = 2 \sin (l/2)$.
The defect for small geodesic distances $l$ is thus quite small, being $O(l^3)$.
It reaches its maximum when the points are antipodal, with a relative error of $(\pi - 2) / \pi$.
The weight function should suppress the contribution of the larger distances to the stress $\sigma$, in order to still recognize when an embedding of the distance graph is \emph{locally} isometric.

Let $\phi: M \to \R^n$ be isometric and let $w_k$ be a sequence of weight functions depending on the cardinality $k$ of $V \subset M$. If $w_k(l) = o(1)$ for fixed $l$ and the marginal defect, which is bounded by
\[
  \frac{k \sup_{p, q \in M} w_k(p, q) \left(d(p, q) - \| \phi(p) - \phi(q) \| \right)^2}{\inf_{|V| = k} \sum_{p, q \in V} w_k(p, q)},
\]
 is summable in $k$, we can at least be sure that the stress function $\sigma$ converges to $0$ for embeddings $X = \phi|_{V}$.

The optimal choice of $w$ then depends (at least somewhat) on the geometry of $M$ itself; the curvature $\inf_{\phi: M \to \R^n} \sup \{ |d(p, q) - \|\phi(p) - \phi(q)\| | \mid p, q \in M, d(p, q) \leq \epsilon\}$, as function of $\epsilon$, together with the Hausdorff distance between $V$ and $M$, determines the optimal behaviour of $w$.

\subsection{Implementation}

For $\dim M = 2$, we expect the length of the smallest edges to scale roughly as $k^{- 1/2}$.
For $w_k(l) = \exp \left(- \sqrt{k} l \right)$ the infimum in the denominator of the marginal defect, above, is roughly bounded from below by its value for an equidistributed $V$, which is of order $k^2 \int_0^{\pi} \sin(l) w_k(l) dl \sim k$ as $k \to \infty$. The supremum in its numerator is $O(k^{-3/2})$, so this sequence $w_k$ will do in the narrow sense that it will asymptotically detect when a sequence $\{\phi_k: V_k \to \R^n\}$ corresponds asymptotically to an isometric embedding of $M$, assuming the $V_k$ are roughly equidistributed.

Given the choice of weights, minima of the resulting stress function can be found efficiently using the weighted SMACOF algorithm for stress majorization.
A simple Python implementation of the weighted SMACOF algorithm is part of~\cite{graphmaker_repo}, but for more intensive use we recommend the more efficient FORTRAN version with Python bindings~\cite{smacof_repo}.

\subsection{The $D_c$ operator on the sphere}
\label{sec:dc-intro}

The results of Section~\ref{sec:localized} apply to any Dirac-type commutative spectral triple.
In particular, they apply to perturbations $(C^\infty(M), L^2(M, \spinorbundle), D_{\spinorbundle} + B)$ of a Dirac spectral triple, as long as the perturbation $B$ does not change the principal symbol of $D$.
We will apply the \algo and embedding algorithms both to the sphere and to a perturbation thereof that arose in the companion paper \cite{GlaserSternCCM19}.

All spin manifolds of dimension $\leq 4$ satisfy (the two-sided version of) the higher Heisenberg equation introduced in~\cite{C.C.M:GeometryQuantumBasics}.
The companion paper \cite{GlaserSternCCM19} explores the constraint that existence of solutions to the one-sided higher Heisenberg equation,
\begin{align}\label{eq:CCM}
\frac{1}{n!} \langle Y \underbrace{[Y,D] \dots [Y,D]}_{\text{repeated $n$ times}} \rangle = \gamma,
\end{align}
places on a truncated spectral triple. There we found that \eqref{eq:CCM}, with $Y$ and $\gamma$ obtained from the Dirac spectral triple of $S^2$, is solved by a one-parameter class of operators $\{D_c \mid c \in \R\} \subset \fdspace$, where
\[
  D_c = D_{S^2} + c B.
\]
Here $B$ is a bounded, self-adjoint operator $B=\operatorname{sign}(D) \cos(\pi D_{S^2})$.
This class of solutions does not strictly describe spectral triples, since the pseudo differential operator $D_c$ does not satisfy the first order condition. As discussed there, however, failure of this condition is not detectable by standard methods at the level of truncated spectral triples.

\subsection{Result}
\label{sec:results}

The \algo algorithm returns a metric graph, given an operator system spectral triple $(A, H, D)$ and a designated element $\embedding \in A^n$.
We apply the locally isometric embedding above not only to the example from Section~\ref{sec:example}, but also (tentatively) to the triple $(C^\infty(S^2)_\Lambda, H_\Lambda, D_{c,\Lambda})$ of \cite{GlaserSternCCM19}, in order to investigate the metric properties of the latter.
Here $\Lambda = 5$, corresponding $\dim H_\Lambda = 84$, which leads to $35$ states.

\begin{figure}
\begin{minipage}[t]{0.5\textwidth}
  \subfloat[$\DS$]{\includegraphics[width=0.9\textwidth]{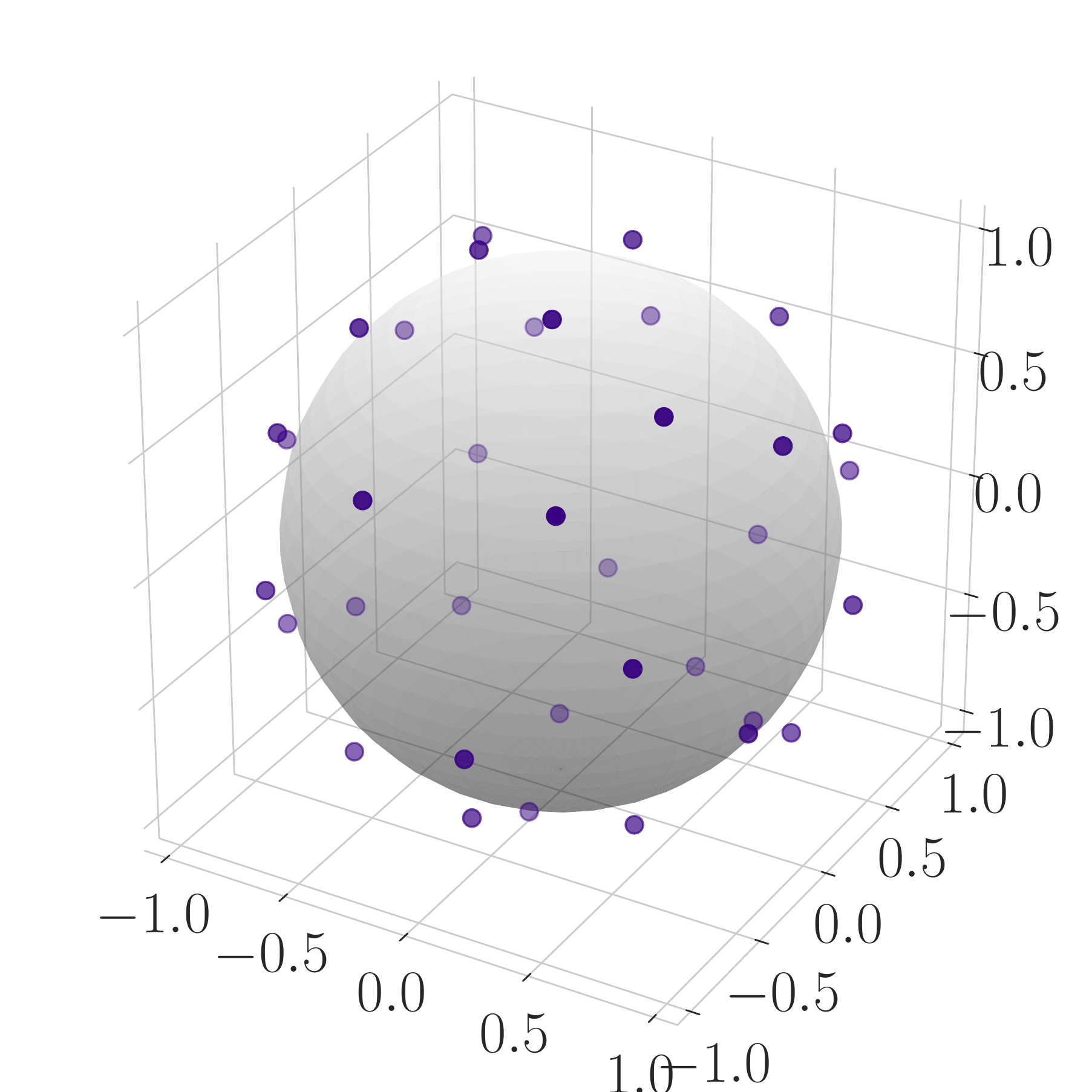}}
\end{minipage}
\begin{minipage}[t]{0.5\textwidth}
  \subfloat[$\DB$]{\includegraphics[width=0.9\textwidth]{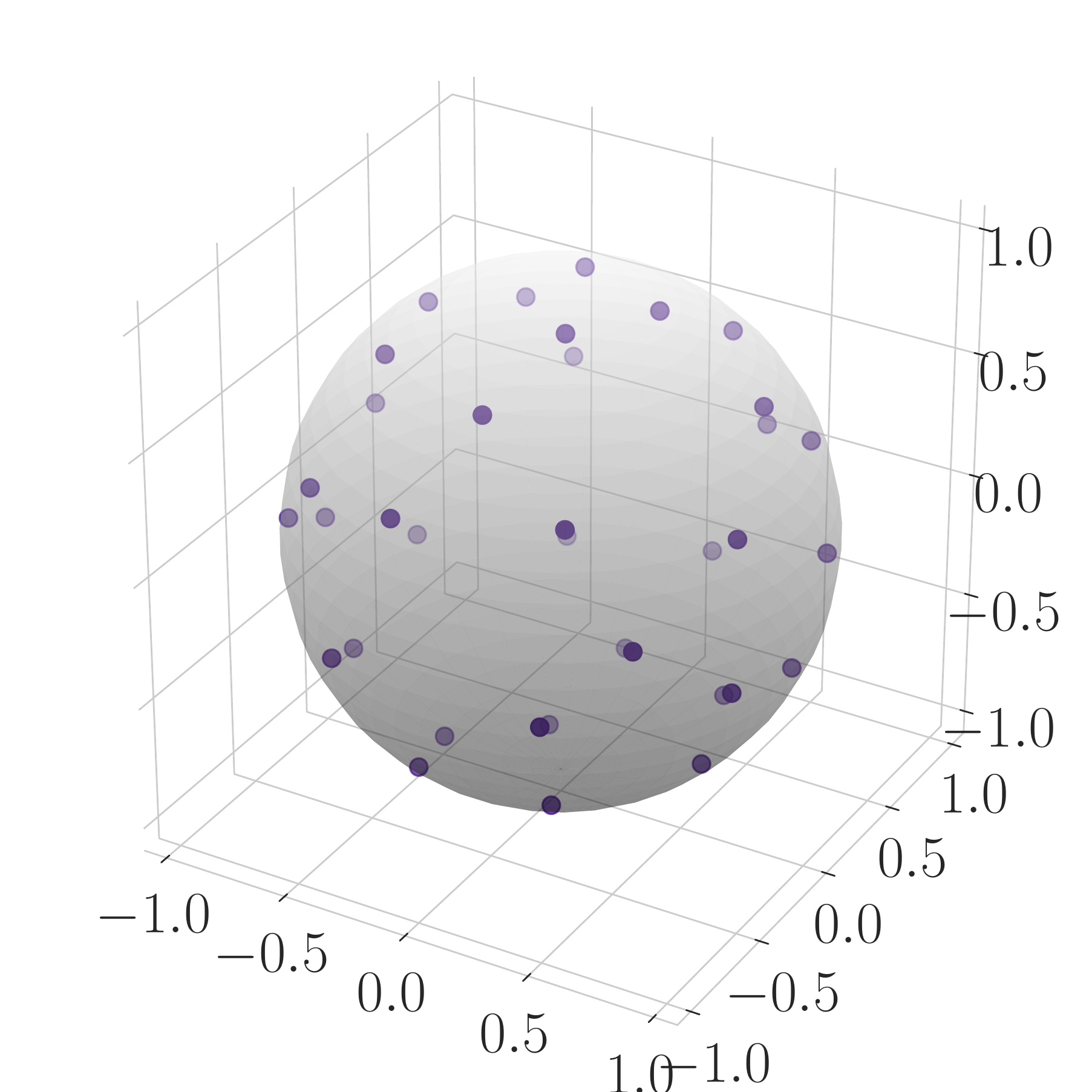}}
\end{minipage}
\caption{Locally almost-isometric embeddings corresponding to $\DS$ and $\DB$, with shaded $S^2$ for reference}
\label{fig:visualisation}
\end{figure}

This leads to the results in Figure~\ref{fig:visualisation}.
There we can see that the embedded points for $\DS$, the left hand plot, lie outside the shaded $S^2$ that is included for reference, while on the other hand the points for $\DB$, in the right hand plot, lie inside the shaded $S^2$.
The transparency of the dots increases with distance to the viewer.
Both embeddings show some deviation from the sphere: for $\DS$, the radii of the embedded points lie in $[1.06,1.12]$, with an average of $1.09$; for $\DB$, in $[0.94,0.98]$ averaging $0.96$.




\section{Final remarks}
\label{sec:conclusion}
The \algo algorithm we introduced in section \ref{sec:algorithm} was designed to reconstruct metric spaces from their truncated commutative (Dirac) spectral triples.
However, the ingredients of the algorithm need not originate as truncations of a commutative spectral triple at all; the steps apply verbatim to arbitrary operator system spectral triples, provided a special 'embedding' element $\embedding$ is given.
Obtaining such $\embedding$ could either be related to the higher Heisenberg equation of \cite{C.C.M:GeometryQuantumBasics}, or, computational resources allowing, be disposed of entirely as discussed in section \ref{sec:local-y-disp}.
This would provide one with the means to construct finite metric spaces associated to an arbitrary noncommutative spectral triple.

It would be interesting to elaborate on this and relate it to quantization and e.g. fuzzy spaces, to get a geometric sense of the relation between a commutative spectral triple and its noncommutative deformations.
This could be particularly useful in connection to more physically inspired explorations of spectral triples and fuzzy spaces, such as~\cite{Barrett_Glaser_2016}.
The ensemble of finite, random spectral triples defined there has shown signs of a phase transition~\cite{Barrett_Glaser_2016,glaserScalingBehaviourRandom2017} and can be characterized through spectral dimension measures~\cite{Barrett:2019aig}, which can be taken as an indication of possibly emergent geometric properties.
The \algo algorithm might then be an interesting tool to further explore some exemplary spectral triples from this class to gather further insights.
It would also be instructive to test how the \algo algorithm works for spectral triples of different topologies, e.g. the non-commutative torus~\cite{Paschke:2006ooy} or a fuzzy torus~\cite{Barrett:2019ize}.

Another possible application in this direction would be to exploit the explicit scale-dependence of the present formalism in order to obtain a better understanding of the gravitational properties of noncommutative approaches (such as \cite{chamseddine1996universal}) to quantum field theory.

\section*{Acknowledgments}
We would like to thank Walter van Suijlekom for extensive discussions and insightful advice.
LG has been funded through grant number M 2577 through the Lise Meitner-Programm of the FWF Austria.
ABS has been funded through FOM Vrij Programma No. 150.

\bibliographystyle{plain}
\bibliography{bib}

\end{document}